\documentclass[sigconf,nonacm]{acmart}
\AtBeginDocument{
  }
\usepackage{mathrsfs}
\usepackage{multirow}
\usepackage{amsthm}
\usepackage{dsfont}

\newtheorem{lemma}{Lemma}
\newtheorem{definition}{Definition}

\usepackage{algorithm}
\usepackage{algorithmic}
\usepackage{makecell}
\usepackage{tabularx}
\usepackage{subcaption}
\usepackage{array}
\usepackage{tcolorbox}
\usepackage{varwidth}
\tcbuselibrary{breakable}
\tcbuselibrary{skins}
\NewTColorBox{theobox}{o m +o}{
    enhanced,
    colframe=gray!0!white,
    interior empty,
    coltitle=white,
    fonttitle=\mdseries,
    colbacktitle=gray,
    extras broken={frame empty,interior empty},
    borderline={0.5mm}{0mm}{gray},
    sharp corners=downhill,
    top=4mm,
    before skip=3.5mm,
    attach boxed title to top left={yshift=-3mm,xshift=5mm},
    boxed title style={boxrule=0pt,sharp corners=all},varwidth boxed title,
    IfNoValueTF={#1}{title=#2}{title=#2:~#1},
    IfNoValueTF={#3}{}{#3}
}

\begin{document}

\title{Provably Secure Steganography Based on List Decoding}

\author{Kaiyi Pang}
\affiliation{
  \institution{Tsinghua University}
  \city{Beijing}
  \country{China}}
\email{pky@mail.tsinghua.edu.cn}

\author{Minhao Bai}
\affiliation{
  \institution{Tsinghua University}
  \city{Beijing}
  \country{China}}
\email{bmh22@mails.tsinghua.edu.cn}

\begin{abstract}
Steganography embeds secret messages in seemingly innocuous carriers for covert communication under surveillance. Current Provably Secure Steganography (PSS) schemes based on language models can guarantee computational indistinguishability between the covertext and stegotext.
However, achieving high embedding capacity remains a challenge for existing PSS. The inefficient entropy utilization renders them not well-suited for  Large Language Models (LLMs), whose inherent low-entropy tendencies severely constrain feasible embedding capacity.
To address this, we propose a provably secure steganography scheme with a theoretically proved high capacity. Our scheme is based on the concept of list decoding: it maintains a set of candidates that contain the correct secret message, instead of directly finding the correct message with more effort. This strategy fully utilizes the information content of the generated text, yielding higher capacity.
To ensure the correctness of our scheme, we further introduce a suffix-matching mechanism to distinguish the correct secret message from the candidates.
We provide theoretical proofs for both the security and correctness of our scheme, alongside a derivation of its theoretical capacity lower bound.
 Our approach is plug-and-play, requiring only a direct replacement of the model's standard random sampling module.
Experiments on three LLMs and seven PSS baselines demonstrate that our method achieves computational efficiency comparable to prior PSS schemes while delivering a substantial improvement in embedding capacity.

\end{abstract}
\keywords{Steganography, Large Language Models, Provable Security, List Decoding}

\maketitle

\section{Introduction}

Steganography is a technique for embedding and transmitting private messages within seemingly innocuous carriers such as text \cite{yang2018rnn, meteor2021, ding2023discop}. Its primary goal is to conceal not only the content of the secret messages but also the very fact that secret communication is occurring. Provably Secure Steganography (PSS)~\cite{hopper2004toward,meteor2021} represents the state-of-the-art paradigm, ensuring that model-generated stegotext is computationally indistinguishable from covertext sampled from the underlying model distribution.

 As a covert communication mechanism, steganography is required not only to satisfy security guarantees but also to maximize communication efficiency. In the context of provably secure generative text steganography~\cite{meteor2021,ding2023discop,shimmer}, communication efficiency is typically defined as the utilization rate of the entropy of the model distribution. Intuitively, a steganographic scheme with higher capacity can embed more secret messages into shorter stegotext, there-by reducing communication frequency. While non-secure methods~\cite{ziegler2019neural,yang2020vae,yang2018rnn,fang2017generating} often boost capacity by distorting the generation distribution, this comes at the unacceptable cost of compromising imperceptibility.

In contrast, existing provable secure steganography methods rely primarily on algorithm constructions to improve capacity but often lack principled optimization strategies. In several representative schemes~\cite{ding2023discop,liao2025framework,meteor2021}, the payload associated with a generation step is recovered using only the current token and the current-step distribution. This creates a structural bottleneck. When multiple message candidates map to the same token, the encoder must either embed only a short common prefix or forgo embedding at that step altogether. Either choice leaves part of the token’s information content unused, reducing entropy utilization. This limitation is especially problematic for Large Language Models (LLMs), whose next-token distributions are often highly peaked, so many tokens carry little entropy and inherently constrain the embedding capacity.

Recent work partially alleviates this issue. SparSamp~\cite{wang2025sparsamp} and Shimmer~\cite{shimmer} allow decoding evidence to accumulate across multiple tokens and achieve noticeable capacity gains. However, these methods remain limited:
SparSamp lacks a theoretical capacity analysis and stops embedding actively when ambiguity persists, while Shimmer loses information during interval splitting and lacks a mechanism to maintain multiple decoding paths without incurring exponential blowup. More broadly, interval-based representations often require repeated conversions between binary strings and fractional values~\cite{wang2025sparsamp,shimmer}, which can be cumbersome and numerically delicate in practice.
Overall, a fundamental gap remains: we need a principled mechanism that continuously exploits entropy by maintaining multiple candidate states within bounded resources, rather than discarding them. Moreover, interval-based representations require frequent conversions between binary strings and fractional values~\cite{wang2025sparsamp,shimmer}, which can be cumbersome in practice.

Inspired by list decoding~\cite{listdecoding1,listdecoding2}  in information theory, we propose a principled steganography scheme that more fully exploits the entropy inherent in the model’s distribution. The basic idea of list decoding is that, when a unique secret message cannot be immediately identified, the decoder instead outputs a compact list of candidates guaranteed to contain the true message. A classic example is the Guruswami–Sudan algorithm~\cite{listdecoding1}, which recovers a small list containing the correct codeword even under high error rates. Adapting this paradigm to steganography, our scheme maintains a ``list of candidate messages'' throughout the generation process. This allows us to defer decoding decisions, thereby avoiding the capacity loss inherent in prior constructions~\cite{meteor2021,ding2023discop,liao2025framework} that prioritize immediate unique decoding.

In this paper, we introduce a provably secure steganographic scheme grounded in a list-decoding perspective. By strictly following the model's sampling distribution to expand and filter the candidate list, our method allows the embedding rate to approach the theoretical limit dictated by the entropy of the generated text. Furthermore, to resolve the residual ambiguity inherent in list decoding, we propose a suffix-matching strategy. This mechanism enables the receiver to efficiently identify the unique secret message from the candidate list, ensuring security and correctness while maintaining an embedding capacity near the model's entropy.

The main contributions of this paper are as follows:
\begin{itemize}
\item We propose a provably secure steganographic scheme based on list decoding, and provide a theoretical analysis showing that its embedding capacity can approach the entropy limit.
\item We introduce a suffix-matching mechanism to assist in fast and correct decoding, and we provide formal proofs of correctness.
\item Extensive experiments demonstrate that our method achieves higher capacity than state-of-the-art PSS schemes while preserving comparable computational efficiency and text quality.

\end{itemize}
\section{Related Work}
\label{relatedworks}

We briefly review representative provably secure steganographic schemes for language models, focusing on how each algorithm embeds and recovers payload information and on the main bottlenecks that limit capacity.

\subsubsection{METEOR}
METEOR \cite{meteor2021}, proposed by Kaptchuk et al., is the first practical provably secure steganographic scheme for language models. It encrypts the secret bitstream using a key-generated random mask via bitwise XOR, interprets the result as a real number, and embeds it by sampling tokens according to the cumulative distribution function of the language model. The masking process is repeated independently at each generation step, and the common prefix of tokens falling within the range represents the secret information that can be embedded at this time step. Although computationally secure, METEOR does not fully exploit the model’s entropy. Re-randomizing the secret bits at every step truncates the coding process and discards residual entropy, which significantly limits embedding capacity—especially for low-entropy token distributions common in LLMs. To address this issue, a reordering algorithm, METEOR(R.), was later proposed to improve the expected embedding capacity.

\subsubsection{DISCOP}

Ding et al. proposed a provably secure steganography method based on multiple shifted copies of the model distribution~\cite{ding2023discop}. When distinct random offsets map to different tokens, multiple secret bits can be embedded in a single step, and the decoder can recover them directly from the generated token. However, substantial overlap among distribution copies severely limits the embedding rate, which is asymptotically bounded by the minimum entropy. To alleviate this limitation, the authors further proposed a Huffman-tree-based variant DISCOP (R.) to improve capacity.

\subsubsection{FDPSS}
Liao et al. proposed a provably secure steganographic design framework, FDPSS~\cite{liao2025framework}, and as an instantiation of this framework, developed a more efficient, capacity-oriented encoding and decoding algorithm. We call this differential-based algorithm GROUP. The encoder first applies a differencing operation to the model’s output distribution and then represents it as a mixture of uniform distributions. During encoding, a random number is sampled to determine which uniform component to draw from, and the secret message is used to select a specific token within the chosen component.
 This method is efficient because both encoding and decoding depend only on the distribution at the current time step.
\subsubsection{Shimmer}
To address the inefficiency of METEOR~\cite{meteor2021}, Bai et al. proposed Shimmer~\cite{shimmer}, a provably secure steganographic scheme that incorporates an entropy-collection mechanism. Its security relies on the indistinguishability between $r \sim \mathsf{Unif}[0,1)$ and $r + B \bmod 1 \sim \mathsf{Unif}[0,1)$, where $B$ denotes the fractional representation of the secret message. During encoding, the sampler adds $B$ to the random variable $r$ and applies inverse transform sampling to generate tokens; the interval shift enables recovery of the secret interval. By progressively merging intervals across steps, Shimmer embeds the secret message while mitigating interval splitting through additional mechanisms. Overall, Shimmer achieves high entropy utilization with computational security and demonstrates how carrying interval state across multiple steps can improve capacity.

\subsubsection{SparSamp}
To address both the capacity and efficiency limitations of DISCOP~\cite{ding2023discop}, Wang et al. proposed SparSamp~\cite{wang2025sparsamp}, a provably secure steganographic scheme that improves decoding accuracy by inserting gaps between sampling intervals without additional time complexity. SparSamp encodes a fixed-length $L$-bit secret by generating $2^L$ samples and repeatedly filtering candidate messages across multiple steps until only the true secret remains. While SparSamp achieves state-of-the-art embedding capacity, it lacks a formal capacity proof, and its fixed-length design still underutilizes joint entropy, leaving room for further improvement.

\section{Method}

\subsection{Prelimilary}

Following the notion of Hopper~\cite{hopper2004toward} and Kaptchuk~\cite{meteor2021}, we formally define steganography scheme as follows:
\begin{definition}[Steganography scheme]
A steganography scheme $\Sigma_{\mathsf{D}}$ on a covertext distribution $\mathsf{D}$ is a triple of (probabilistic) algorithms $(\mathsf{KeyGen},  \mathsf{Enc}_{sk,\mathsf{D}}, \mathsf{Dec}_{sk,\mathsf{D}})$.
    \begin{itemize}
    \item $\mathsf{KeyGen}(1^\lambda)$ is a randomized algorithm that takes an arbitrary input of length $\lambda$ and generates the key $sk$ shared between the sender and receiver.
    \item $\mathsf{Enc}_{sk,\mathsf{D}}(m,h)$ is a keyed randomized algorithm that takes as input a secret message $m \in \{0,1\}^{*}$ and the history $h$ and outputs a stegotext $t$.
    \item $\mathsf{Dec}_{sk,\mathsf{D}}(t,h)$ is a keyed randomized algorithm that takes as input the stegotext $t$ and the history $h$ and outputs the secret message $m$.
\end{itemize}
\end{definition}

In this paper, we mainly use the language models to obtain the covertext distribution $\mathsf{D}$. So the encoding and decoding algorithms can be written as $\mathsf{Enc}_{sk,\mathsf{Model}}(m,h)$ and $\mathsf{Dec}_{sk,\mathsf{Model}}(t,h)$
Basically, a useful steganography scheme $\Sigma_{\mathsf{Model}}$ should satisfy \textit{correctness} and \textit{security}.

\begin{definition}[Correctness]
   A steganography scheme $\sum_{\mathsf{Model}}$ is \textit{correct} if for any history $h$ and any message $m$, the stegotext generated by $\mathsf{Enc}_{sk,\mathsf{Model}}(m,h)$ should be correctly decoded by the decoding algorithm only with negligible probability of error.
    \begin{equation}
\begin{aligned}
    \Pr[\mathsf{Dec}_{sk,\mathsf{Model}}(\mathsf{Enc}_{sk,\mathsf{Model}}(m,h),h)= m] & \\
    \geq 1-\mathsf{negl}(\lambda), &
\end{aligned}
\end{equation}
where $\mathsf{negl}(\lambda)$ is the negligible function that correlates to the secure parameter $\lambda$.
\end{definition}

\paragraph{Threat model}
We consider a probabilistic polynomial-time (PPT) warden
(adversary) who monitors the communication channel and attempts to
determine whether an observed message is a stegotext or an innocent
covertext. Following the standard chosen-hiddentext attack (CHA)
setting~\cite{hopper2004toward,meteor2021}, the warden knows the full
public description of the scheme, including all public hyperparameters
that determine the covertext channel (e.g., the language model and
tokenizer). The only secret information withheld from the warden is the
shared key~$sk$. The warden is given oracle access either to the stego
encoder $\mathsf{Enc}_{sk,\mathsf{Model}}(\cdot,\cdot)$ or to the
model sampler $\mathsf{Model}(\cdot)$, and try to distinguish between the
two.

\paragraph{Security definition}
We define the security of steganographic system against chosen hiddentext attacks~\cite{hopper2004toward,meteor2021}. Intuitively, $\Sigma_{\mathsf{Model}}$ is secure if no probabilistic polynomial-time (PPT) adversary can distinguish, with non-negligible advantage, between access to $\mathsf{Enc}_{sk,\mathsf{Model}}(\cdot,\cdot)$ and access to the model sampler $\mathsf{Model}(\cdot)$, given outputs of the same length. More formally,
\begin{definition}[Security]
A steganography scheme $\Sigma_{\mathsf{Model}}$ is secure if for any probabilistic polynomial-time adversary $\mathcal{A}$, he cannot effectively distinguish the stegotext generated by $\mathsf{Enc}$ and the covertext generated by $\mathsf{Model}$.
    \begin{equation}
        \begin{aligned}
            |\Pr[\mathcal{A}^{\mathsf{Enc}_{sk,\mathsf{Model}}(\cdot,\cdot)}(1^\lambda)=1]-\Pr[\mathcal{A}^{\mathsf{Model}(\cdot)}(1^\lambda)=1]| & \\
            \leq \mathsf{negl}(\lambda). &
        \end{aligned}
    \end{equation}
    Here the $1^\lambda$ represents the internal randomness of adversary $\mathcal{A}$, and $\mathcal{A}$ outputs $1$ if he recognizes the output of $\mathsf{Enc}$ or $\mathsf{Model}$ as stegotext.
\end{definition}

\paragraph{Access assumptions}
Our construction assumes a symmetric setting in which both the sender and the receiver can access the full conditional next-token distribution $D=\mathsf{Model}(h)$ for any history $h$. Such access is required to perform alias sampling and to synchronize the message-to-token mapping between the two parties. Pure black-box settings and other asymmetric scenarios(such as the sender and receiver have the different ) are beyond the scope of this work.

\begin{figure*}[htb]
	\centering
	\includegraphics[width=\linewidth]{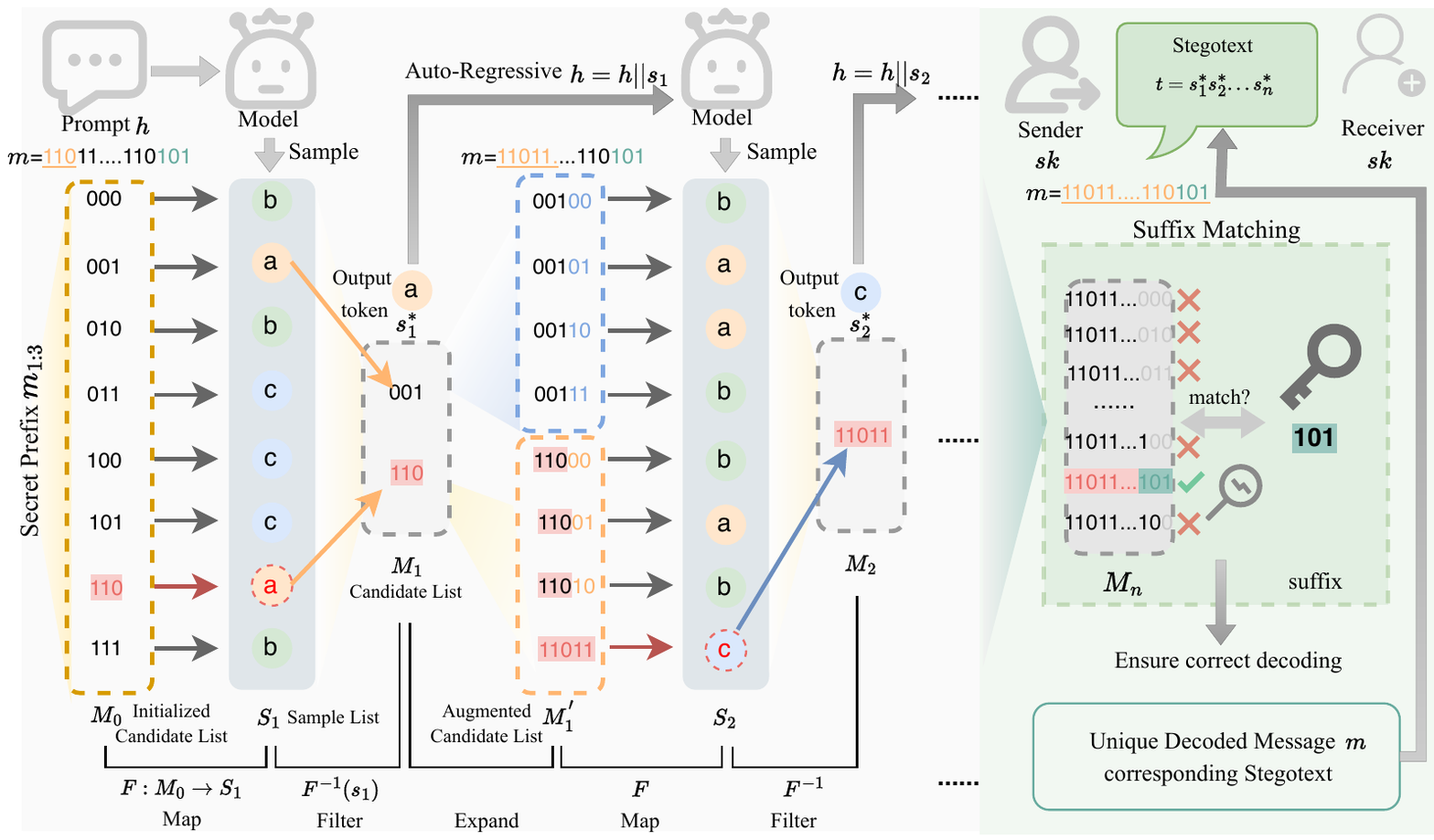}
	\caption{Toy illustration to our embedding process. The prefix of the secret message $m$ defines an initial candidate list $M_0$. After mapping through the language model to obtain $S$, candidates whose prefixes correspond to the same tokens in $S$ are retained, forming the filtered set $M_1$, and the token $s_1$ associated with the secret message is selected for output. Autoregressive generation proceeds in this manner until both the secret message and its suffix are fully embedded. Finally, suffix matching is applied to the filtered candidate list $M_t$, enabling rapid elimination of non-secret candidates and guaranteeing unique decodability of the secret message from the stegotext.}
	\label{example}
\end{figure*}

\paragraph{Useful inequalities}
The following two classical inequalities are used frequently in our
analysis.
\begin{lemma}[{\bf Hoeffding's inequality}]\label{hoeffding-inequality}
$X_1, X_2, \cdot\cdot\cdot , X_n$ are independent and identical random variables, and each $X_i$ is bounded by $[l,h]$.
Let $X = \frac{1}{n}\sum_{i=1}^n X_i$ and $\mu = \mathbb{E}[X]$, the probability that the sample mean $X$ deviates from the theoretical mean $\mathbb{E}[X]$ up to $t$ is
\begin{align}
    \mathbb{P}\left(\left|X - \mu\right| \geq t\right) \leq 2\exp\left(-\frac{2nt^2}{\left(h-l\right)^2}\right).
\end{align}
\end{lemma}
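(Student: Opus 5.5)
The plan is the standard Chernoff-type argument, which gives the tail bound for the upper deviation $\mathbb{P}(X-\mu\ge t)$; the lower tail then follows by symmetry and a union bound. Here $\mu$ equals the common mean $\mathbb{E}[X_i]$.

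\textbf{Step 1 (exponential moment).} Fix $s>0$. By Markov's inequality applied to $e^{s n (X-\mu)}$, and then independence of the $X_i$,
\begin{align}
\mathbb{P}(X-\mu\ge t) \le e^{-snt}\,\mathbb{E}\left[e^{s\sum_{i}(X_i-\mathbb{E}X_i)}\right] = e^{-snt}\prod_{i=1}^n \mathbb{E}\left[e^{s(X_i-\mathbb{E}X_i)}\right].
\end{align}

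\textbf{Step 2 (Hoeffding's lemma).} This is the main step. We must show that a centered variable $Y$ supported in an interval of length $h-l$ satisfies
\begin{align}
\mathbb{E}[e^{sY}]\le \exp\left(\frac{s^2(h-l)^2}{8}\right).
\end{align}
I would first use convexity of $y\mapsto e^{sy}$ on $[a,b]$, where $a=l-\mathbb{E}X_i$ and $b=h-\mathbb{E}X_i$. Bounding $e^{sy}$ by the chord through the endpoints and taking expectations with $\mathbb{E}Y=0$ gives
\begin{align}
\mathbb{E}[e^{sY}]\le \frac{b e^{sa}-a e^{sb}}{b-a}=e^{\varphi(u)},
\end{align}
where $u=s(b-a)$, $p=-a/(b-a)$, and $\varphi(u)=-pu+\log(1-p+pe^{u})$.

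Next I would check $\varphi(0)=\varphi'(0)=0$. Writing $q=pe^{u}/(1-p+pe^{u})$, we have $\varphi''(u)=q(1-q)\le 1/4$. Taylor's theorem with remainder then gives $\varphi(u)\le u^2/8$, which is the claimed bound.

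\textbf{Step 3 (optimize).} Combining Steps 1 and 2 gives
\begin{align}
\mathbb{P}(X-\mu\ge t)\le \exp\left(-snt+\frac{ns^2(h-l)^2}{8}\right).
\end{align}
Choosing $s=4t/(h-l)^2$ makes the exponent $-2nt^2/(h-l)^2$.

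\textbf{Step 4 (lower tail and union bound).} Applying the same argument to the variables $-X_i$, which lie in $[-h,-l]$, an interval of the same length, bounds $\mathbb{P}(X-\mu\le -t)$ by the same quantity. A union bound over the two tails yields the factor $2$ in the statement.

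Identical distribution of the $X_i$ is not needed; only independence and boundedness are used. The only genuine obstacle is the curvature bound $\varphi''\le 1/4$ in Step 2, which reduces to the elementary inequality $q(1-q)\le 1/4$.
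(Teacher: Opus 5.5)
Your proof is correct, and it is the standard textbook one. You combine a Chernoff bound with Hoeffding's lemma, proving the lemma by the chord bound and the curvature estimate $\varphi''=q(1-q)\le 1/4$. The choice $s=4t/(h-l)^2$ then gives exactly the exponent $-2nt^2/(h-l)^2$.

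The paper does not prove this lemma at all. It quotes Hoeffding's inequality as a classical result among its ``useful inequalities'', so there is no competing argument to compare against; your write-up supplies the derivation the paper takes for granted.

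Two tacit hypotheses should be stated explicitly:
\begin{itemize}
\item $t>0$, which you need so that $s>0$. For sufficiently negative $t$ the displayed bound is actually false, since the left side equals $1$.
\item $h>l$, so that $p=-a/(b-a)$ is defined. The case $h=l$ is trivial.
\end{itemize}
You are also right that identical distribution is never used.

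Your route also has a practical payoff. Step 3 already yields the one-sided bound $\mathbb{P}(X-\mu\ge t)\le \exp\left(-2nt^2/(h-l)^2\right)$ without the factor $2$. That one-sided form, applied to indicator variables in $[0,1]$, is what the paper actually invokes in Eq.~\ref{eq3.1} and in the capacity analysis. So your proof justifies those later uses more directly than the two-sided statement as written.
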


\begin{lemma}[{\bf Cauchy-Schwarz inequality}]\label{cauchy-schwarz-inequality}
For any real sequences $\{a_i\}_{i=1}^l$ and $\{b_i\}_{i=1}^l$, the square of the sum of their pairwise products is bounded by the product of the sums of their squares:
\begin{align}
    \left( \sum_{i=1}^l a_i b_i \right)^2 \leq \left( \sum_{i=1}^l a_i^2 \right) \left( \sum_{i=1}^l b_i^2 \right).
\end{align}
In the special case where $a_i = 1$ for all $i$, the inequality simplifies to:
\begin{align}
    \left( \sum_{i=1}^l b_i \right)^2 \leq l \sum_{i=1}^l b_i^2.
\end{align}
\end{lemma}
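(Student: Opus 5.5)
We prove the inequality by the classical discriminant argument applied to a nonnegative quadratic polynomial, and then read off the special case by substitution. Set $A=\sum_{i=1}^l a_i^2$, $B=\sum_{i=1}^l a_i b_i$, and $C=\sum_{i=1}^l b_i^2$, all real numbers. The claim is $B^2\le AC$.

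First we dispose of the degenerate case $A=0$. Then every $a_i=0$, so $B=0$ and the inequality reads $0\le 0\cdot C$, which holds trivially.

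Now assume $A>0$ and define, for real $x$,
\begin{align}
  q(x)=\sum_{i=1}^l (a_i x + b_i)^2 = A x^2 + 2Bx + C .
\end{align}
Each summand is a square of a real number, so $q(x)\ge 0$ for every $x\in\mathbb{R}$. Because $A>0$, $q$ is a genuine quadratic with positive leading coefficient. It is nonnegative everywhere, so it has at most one real root, and its discriminant must be nonpositive:
\begin{align}
  (2B)^2-4AC\le 0, \qquad\text{i.e.}\qquad B^2\le AC .
\end{align}
If one prefers to avoid quoting the discriminant criterion, evaluate $q$ at its minimiser $x^\ast=-B/A$. This gives $q(x^\ast)=C-B^2/A\ge 0$, and multiplying by $A>0$ yields the same conclusion. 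This proves the general inequality.

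For the special case, substitute $a_i=1$ for all $i$. Then $\sum_{i=1}^l a_i b_i=\sum_{i=1}^l b_i$ and $\sum_{i=1}^l a_i^2=l$, so the general inequality becomes $\bigl(\sum_{i=1}^l b_i\bigr)^2\le l\sum_{i=1}^l b_i^2$.

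The argument is routine; the only step needing care is the case split on $A=0$. That case must be handled separately so that the quadratic is non-degenerate, or equivalently so that dividing by $A$ in the minimiser computation is legitimate.
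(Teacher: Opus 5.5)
Your proof is correct and complete. The separate treatment of $A=0$, the nonnegativity of $q(x)=Ax^2+2Bx+C$, and the evaluation at $x^\ast=-B/A$ give $B^2\le AC$, and substituting $a_i=1$ yields the special case. The paper offers no proof to compare against: it quotes the lemma as a classical inequality in its preliminaries and uses only the $a_i=1$ case, to bound $\sum_i \delta_i$ in the capacity analysis. Your standard argument supplies exactly what the paper takes for granted.
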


\subsection{Our Steganography Scheme}

\subsubsection{Intuition}

As discussed in Section~\ref{relatedworks}, many existing provably secure steganography methods recover payload information using only the current token and the current-step distribution. This limitation often forces the sender to skip embedding at certain steps to avoid ambiguity, resulting in suboptimal capacity usage.
To overcome this, we look to list decoding~\cite{listdecoding1,listdecoding2} for inspiration. Fundamentally, list decoding relaxes the constraint of outputting a single message, allowing the algorithm to produce a list of possibilities, one of which is the correct message.
Inspired by list decoding, our proposed scheme does not require the receiver to uniquely decode the secret at the immediate token level. Instead, we allow the secret information to be embedded continuously at every time step, even if immediate extraction is ambiguous, aiming to maximize the utilization of the generative distribution's entropy and expand steganographic capacity. The correct secret message persists within a set of surviving candidates across time steps, while incorrect candidates (those conflicting with the secret) are naturally "filtered" out during the sequential token selection process. To ensure correctness, we introduce a suffix matching mechanism. The combination of this natural filtering and suffix matching theoretically guarantees that the receiver can uniquely recover the message after observing the complete stegotext (theoretical proofs regarding correctness are detailed in Section~\ref{proof_of_correctness}). Figure~\ref{example} also illustrates an example of this encoding process.

\subsubsection{Codec}
The workflow of the encoding and decoding algorithm can be summarized by a four-step process: \textbf{"Mapping, Filtering, Expanding, and Matching"}.

\textbf{Mapping}.
During the sampling process, the sender must generate a number of samples equal to the number of candidate secret messages in the current candidate list. To share these samples between the sender and receiver, a pseudorandom generator is employed to generate all random numbers used during the sampling process. To effectively gather a sufficient quantity of samples for establishing a direct mapping between secret messages and samples, some fast sampling techniques such as the Alias sampling method \cite{alisa1,alisa2}, detailed in Appendix \ref{alias-method}, can be employed. Assuming the list of candidate secret messages at a certain step is $M$, we generate $|M|$ samples $\{s_i\}_{i = 1}^{|M|}$ to construct the sample list $S$ by the codebook mapping $F(m_i) := s_i, m_i \in M$. Subsequently, based on the current prefix of the embedded bitstring (secret bits together with the validation suffix), denoted $m_{1:l}$, the sample $s^* = F(m_{1:l})$ is output. Since the generated samples are mutually independent, randomly selecting and outputting one is equivalent to the model's normal sampling process (detail is provided in the security proof section~\ref{security_proof}).

\textbf{Filtering}.
\label{subsec:list-decoding-filtering}
Transmitting a single sample $s^*$ does not uniquely determine the secret message because multiple different candidate messages may map to the same sample; that is, the pre-image set $F^{-1}(s^*) = \{ m_i : F(m_i) = s^* \}$ may contain multiple elements. We can view the embedding process as a filtering process that \textbf{progressively eliminates candidate messages}.

At the start of embedding, the sender initializes a candidate list $M_0$ containing all possible $N$-bit strings, with a size of $2^N$. Clearly, the $N$-bit prefix of the real secret message $m^*_{1:N}$ is in $M_0$. The sender then constructs the Sample List $S$ by message-to-token mapping $F_1$, and sends the sample $s_1^* = F_1(m^*_{1:N})$. Upon receiving $s_1^*$, the receiver identifies all candidates capable of producing this sample using the inverse mapping, which is the pre-image set of $s^*:M_1 = F_1^{-1}(s^*)$. At this point, the prefix $m^*_{1:N}$ must also belong to $M_1$. Since the pre-image set $M_1$ is a subset of $M_0$, we have $|M_1| \leq |M_0|$, narrowing the candidate list.

Further analysis reveals the probability and extent of this list reduction. Assuming the output token is $s$ with probability $D(s)$, and we draw $|M_0|$ i.i.d.\ samples to construct the mapping, on average, a proportion of $D(s)$ of the candidate messages remains, i.e., $\mathbb{E}\left[|M_1|\right] = D(s) \cdot |M_0|$. According to Hoeffding's inequality, the upper bound of the ratio $\frac{|M_1|}{|M_0|}$ is given by:
\begin{align}\label{eq3.1}
    \Pr\left[\frac{|M_1|}{|M_0|} \geq D(s) + \delta \right] \leq \exp\left(-2\delta ^2|M_0|\right).
\end{align}

Therefore, determining a sufficiently large $M_0$ guarantees the elimination of a $(1-D(s))$ proportion of candidates. For instance, when $|M_0| = 2^{20}$, the deviation $\delta$ rarely exceeds $0.5\%$. In practical applications, the number of samples should be maximized without affecting the execution time, and a very large candidate list should be maintained at each step to ensure filtering efficiency.

From an information-theoretic perspective, although the secret message is not uniquely determined at this stage, the possible choices are reduced. for the decoder, the entropy of the secret message drops from $\log_2(|M_0|)$ to $\log_2(|M_1|)$, a reduction of $\log_2\left(\frac{|M_0|}{|M_1|}\right) \approx \log_2(D(t))$. This approximates the information of the token.

\textbf{Expanding}
After filtering, the size of the candidate list decreases. To embed more secret bits and maintain the candidate list size at a high level, the list should be \textbf{expanded}. A feasible strategy is as follows: for each candidate message $m$ in the current list $M_i$, append bits `0' and `1' to generate two new messages $m \| 0$ and $m \| 1$. These form the augmented list $M_n'$. Clearly, if the $k$-bit prefix $m^*_{1:k}$ belongs to $M_i$, then the $(k+1)$-bit prefix $m^*_{1:k+1}$ belongs to $M_i'$. To prevent memory overflow from exponential growth, an upper limit $2^N$ (e.g., $2^{20}$) is set. If the expanded list size is still less than half of the limit ($2^{N-1}$), the expansion operation repeats until the size exceeds $2^{N-1}$. Consequently, the filtering step is always performed with more than $2^{N-1}$ candidate messages, effectively ensuring that the entropy dissipated in each step remains low. The complete filtering-expanding process can be viewed as follows:
\begin{align}
    M_0 \xrightarrow{\text{Filter}}     \cdots \xrightarrow{\text{Filter}}  M_{i} \xrightarrow{\text{Expand } e_1 \text{ times}} M'_{i} \xrightarrow{\text{Filter}} M_{i + 1} \xrightarrow{\text{Filter}} \cdots \notag
\end{align}

\begin{algorithm}[htb]
    \caption{$\mathsf{Enc}_{sk,\mathsf{Model},2^N}(h,m^*) \xrightarrow[]{} t$}
    \renewcommand{\algorithmicrequire}{\textbf{Input:}}
    \renewcommand{\algorithmicensure}{\textbf{Output:}}
    \label{alias-encode}
    \begin{algorithmic}[1]
    \REQUIRE Secret key $sk$, Language Model $\mathsf{Model}$, History $h$, Secret message $m^*$, Maximum list length $2^N$;
    \ENSURE Stegotext $t$.
    \STATE $M,t \longleftarrow \{0,1\}^N, \emptyset$ \COMMENT{Initialize candidate message list $M$ and stegotext $t$}
    \STATE $l,cnt \longleftarrow N,0$ \COMMENT{Initialize message-length counter $l$ and counter $cnt$}
    \STATE $suf \longleftarrow \mathsf{PRG}_{sk}(\cdot)$ \COMMENT{Shared validation suffix}
    \STATE $m \longleftarrow m^* \| suf$ \COMMENT{Add suffix}
    \WHILE{$l \leq |m|$}
    \STATE $cnt \longleftarrow cnt + 1$
    \STATE $D \longleftarrow \mathsf{Model}(h)$ \COMMENT{Obtain distribution}
    \STATE $\{r_i\}_{i = 1}^{2|M|} \longleftarrow \mathsf{PRG}_{sk}(\cdot)$ \COMMENT{Derive pseudorandomness}
    \STATE $\{s_i\}_{i = 1}^{|M|} \longleftarrow \mathsf{AliSample}_{|M|}\left(D,\{r\}_{i = 1}^{2|M|}\right)$ \COMMENT{Execute Alias sampling to obtain $|M|$ samples}
    \FOR{$i=1$ {\bf to} $|M|$}
        \STATE Let $F(m_i) := s_i$ \COMMENT{Construct a mapping}
    \ENDFOR
    \STATE $s^* \longleftarrow F(m_{1:l})$ \COMMENT{Select token corresponding to current message prefix}
    \STATE $t,h \longleftarrow t \| s^*, h\| s^*$ \COMMENT{Append new token $s^*$ to stegotext and history}

    \STATE $M \longleftarrow F^{-1}(s^*)$ \COMMENT{Find pre-image set of $s^*$ and update candidate message list $M$}
    \WHILE{$|M| \leq 2^{N - 1}$}
        \STATE $M \longleftarrow \left\{m \| bit : m \in M, bit \in \{0,1\}\right\}$ \COMMENT{Expand candidate message list $M$}
        \STATE $l \longleftarrow l + 1$ \COMMENT{Increment candidate message length after expansion}
    \ENDWHILE
    \ENDWHILE
    \RETURN $t$
    \end{algorithmic}
\end{algorithm}

\textbf{Matching}
\label{subsec:checksum-embedding-decoding}
The "Filter-Expand" process repeats until all bits of the secret message are embedded. At this stage, the candidate list $M_n$ still contains multiple messages, one of which is the complete real secret message $m^*$. To ensure the receiver can uniquely determine $m^*$, a {validation phase} is appended. Concretely, the sender and receiver derive a pseudorandom validation suffix $suf$ from the shared key, append it to the payload, and continue embedding. Since the suffix is pseudorandom and sufficiently long (detailed analysis in Section~\ref{proof_of_correctness}), with high probability only the candidate corresponding to the real payload will fully match it, enabling unique decoding.

The decoding algorithm is essentially the symmetric inverse of the encoding process. Detailed procedures for the encoding algorithm, $\mathsf{Enc}_{sk,\mathsf{Model},2^N}(h,m^*)$, and the decoding algorithm,\\ $\mathsf{Dec}_{sk,\mathsf{Model},2^N}(h,t,|m^*|)$, are presented in Algorithm \ref{alias-encode} and Algorithm \ref{alias-decode}, respectively. To account for potential hardware constraints, we denote the maximum candidate list length, which also serves as the maximum sample count, as the subscript parameter $2^N$.

\begin{algorithm}[ht]
    \caption{$\mathsf{Dec}_{sk,\mathsf{Model},2^N}(h,t,|m^*|) \xrightarrow[]{} m^*$}
    \renewcommand{\algorithmicrequire}{\textbf{Input:}}
    \renewcommand{\algorithmicensure}{\textbf{Output:}}
    \label{alias-decode}
    \begin{algorithmic}[1]
    \REQUIRE Secret key $sk$, Language Model $\mathsf{Model}$, History $h$, Stegotext $t$, Payload length $|m^*|$, Maximum list length $2^N$;
    \ENSURE Recovered secret message $m^*$.
    \STATE $M \longleftarrow \{0,1\}^N$ \COMMENT{Initialize candidate message list}
    \STATE $l,cnt \longleftarrow N,0$ \COMMENT{Initialize counters}
    \STATE $suf \longleftarrow \mathsf{PRG}_{sk}(\cdot)$ \COMMENT{Reconstruct shared validation suffix}
    \WHILE{$l \leq |m^*| + |suf|$}
        \STATE $cnt \longleftarrow cnt + 1$
        \STATE $D \longleftarrow \mathsf{Model}(h)$ \COMMENT{Obtain distribution}
        \STATE $\{r_i\}_{i = 1}^{2|M|} \longleftarrow \mathsf{PRG}_{sk}(\cdot)$ \COMMENT{Reconstruct randomness}
        \STATE $\{s_i\}_{i = 1}^{|M|} \longleftarrow \mathsf{AliSample}_{|M|}\left(D,\{r\}_{i = 1}^{2|M|}\right)$ \COMMENT{Alias sampling}
        \FOR{$i=1$ {\bf to} $|M|$}
            \STATE Let $F(m_i) := s_i$ \COMMENT{Reconstruct mapping $F$}
        \ENDFOR
        \STATE $h \longleftarrow h \| t_{cnt}$ \COMMENT{Autoregressive update of history}
        \STATE $M \longleftarrow F^{-1}(t_{cnt})$ \COMMENT{Filter candidates by observed token}
        \WHILE{$|M| \leq 2^{N - 1}$}
            \STATE $M \longleftarrow \left\{m \| bit : m \in M, bit \in \{0,1\}\right\}$ \COMMENT{Expand candidate list}
            \STATE $l \longleftarrow l + 1$
        \ENDWHILE
    \ENDWHILE
    \STATE Find the unique $m \in M$ such that $m_{|m|-|suf|:|m|} = suf$
    \STATE Output $m^* \longleftarrow m_{1:|m^*|}$ \COMMENT{Drop validation suffix}
    \RETURN $m^*$
    \end{algorithmic}
\end{algorithm}

\subsection{Proof of Correctness}
\label{proof_of_correctness}
We now verify the condition under which the probability of non-unique decoding is negligible.
\begin{proposition}\label{prop3.1}
    For any non-secret message element $m$ in the candidate list, if the validation suffix $suf \in \{0,1\}^{b}$ is uniformly random (or pseudorandomly generated via $\mathsf{PRG}_{sk}$) and the maximum candidate list size is $2^N$, the probability that $m\|suf$ appears in the subsequent candidate list is at most $\frac{2^{-b}}{\left(1 + \sqrt{\frac{\lambda}{2^N}}\right)^n}$, where $n$ is the number of tokens added to embed $suf$.
\end{proposition}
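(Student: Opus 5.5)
The plan is to follow one fixed non-secret candidate $m$ through the $n$ token steps of the validation phase. At each step we compare the probability that its current extension survives the filtering step with the factor by which the list shrinks. The expansions during this phase are what append the $b$ suffix bits. Because the list size is kept in $(2^{N-1},2^N]$, the total expansion factor over the phase is tied to the product of the per-step shrink ratios. This is where the $2^{-b}$ should come from.

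\textbf{Setup.} Index the suffix-phase steps by $j=1,\dots,n$. Let $M'_{j-1}$ be the (expanded) list that is filtered at step $j$, let $s_j^*$ be the emitted token, and let $M_j=F_j^{-1}(s_j^*)$. Let $e_j$ be the number of expansions after step $j$, so that $\sum_j e_j=b$ up to boundary terms and $|M'_j|=2^{e_j}|M_j|$. Consider the particular extension $m\|suf_{1:k}$ present in $M'_{j-1}$. Since $m$ is not the true prefix, its sample under the Alias sampler is independent of the true message's sample, so it survives step $j$ with probability $D_j(s_j^*)$. Independence across steps comes from fresh $\mathsf{PRG}_{sk}$ randomness; for the pseudorandom case we would first argue with truly uniform randomness and then pass to $\mathsf{PRG}_{sk}$ at a negligible cost. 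Hence
\[
\Pr[m\|suf\in M_n]\le \mathbb{E}\Bigl[\textstyle\prod_{j=1}^n D_j(s_j^*)\Bigr],
\]
and one also needs $m\|suf$ to be the string produced by the expansions, which is guaranteed by construction.

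\textbf{Step 1: relate survival probability to shrink ratio.} Apply the Hoeffding bound \eqref{eq3.1} (Lemma~\ref{hoeffding-inequality}) to each filtering step with $|M'_{j-1}|>2^{N-1}$. Choosing $\delta=\sqrt{\lambda/2^{N}}$ gives a failure probability of at most $e^{-\lambda}$ per step, which is negligible and can be absorbed after a union bound over the $n$ steps. On the good event, the realised ratio $\rho_j=|M_j|/|M'_{j-1}|$ and $D_j(s_j^*)$ differ by at most $\delta$. The key move is to make this deviation multiplicative, namely
\[
D_j(s_j^*)\le \frac{\rho_j}{1+\sqrt{\lambda/2^N}}.
\]
I would try to obtain this by using the bound on the upward direction together with normalising $\delta$ relative to the list size, since $|M'_{j-1}|\ge 2^{N-1}$.

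\textbf{Step 2: telescope.} Multiply the per-step bounds over the $n$ steps. By the bookkeeping $|M'_j|=2^{e_j}|M_j|$ and the size invariant, the product $\prod_j\rho_j$ telescopes to $|M_n|\,2^{-\sum e_j}/|M'_0|$. This is at most $2^{-b}$, up to the constant coming from the list size being pinned in $(2^{N-1},2^N]$, which the algorithm's expansion rule is designed to control. This yields the claimed
\[
\frac{2^{-b}}{\bigl(1+\sqrt{\lambda/2^N}\bigr)^n}.
\]

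\textbf{Main obstacle.} I expect the main obstacle to be Step 1: turning the additive Hoeffding deviation into the stated multiplicative factor in the correct direction. Low-probability tokens make the relative error large, so the step may need to use the fact that only steps where $\rho_j$ is not tiny contribute. The alternative is to restate the bound as an expectation over $s_j^*$, where $\mathbb{E}[\rho_j]$ and $\mathbb{E}[D_j(s_j^*)]$ can be compared directly via Cauchy--Schwarz (Lemma~\ref{cauchy-schwarz-inequality}), using $\sum_s D(s)^2\le\bigl(\sum_s D(s)\rho(s)\bigr)$ plus a deviation term. I would try the expectation and Cauchy--Schwarz route first, since the paper lists that inequality as a tool.
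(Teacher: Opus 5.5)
\textbf{The gap is in Step 1.} This is not merely a hard step: the per-step inequality $D_j(s_j^*)\le\rho_j/(1+\sqrt{\lambda/2^N})$ is false in general, and it points the opposite way from anything Hoeffding can give.

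Condition on $s_j^*=s$. The true candidate lies in $M_j$, and every other candidate survives independently with probability $D_j(s)$. Hence $|M_j|=1+\mathrm{Bin}(|M'_{j-1}|-1,D_j(s))$, and $\rho_j$ concentrates at $D_j(s)+(1-D_j(s))/|M'_{j-1}|$ with fluctuations of order $\sqrt{D_j(s)(1-D_j(s))/2^N}$. Requiring $\rho_j\ge(1+\delta)D_j(s_j^*)$ with $\delta=\sqrt{\lambda/2^N}$ is therefore an upward deviation of roughly $\sqrt{\lambda D_j/(1-D_j)}$ standard deviations, which is a rare event. It is impossible outright whenever $D_j(s_j^*)>1/(1+\delta)$, since it would force $\rho_j>1$, and for peaked LLM distributions such steps are frequent.

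On the good Hoeffding event you only have $|\rho_j-D_j(s_j^*)|\le\delta$, up to an $O(2^{-N})$ shift. That yields $D_j(s_j^*)\le\rho_j+\delta$, and multiplying gives $\prod_j D_j(s_j^*)\le\prod_j\rho_j\cdot\prod_j(1+\delta/\rho_j)$. This is a correction factor at least $1$, not a gain of $(1+\delta)^{-n}$.

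Your expectation/Cauchy--Schwarz fallback cannot produce the factor either. It only recovers $\mathbb{E}[D_j(s_j^*)]=\sum_s D_j(s)^2\le\mathbb{E}[\rho_j]$, with slack $\sum_s D_j(s)(1-D_j(s))/|M'_{j-1}|<2^{1-N}$ and no $(1+\delta)$ anywhere. Moreover, expectations do not pass through the product, because the number of expansions is determined by the same $\rho_j$.

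\textbf{Step 2 does not close exactly either.} From $|M'_j|=2^{e_j}\rho_j|M'_{j-1}|$ you get $\prod_j\rho_j=2^{-\sum_j e_j}|M'_n|/|M'_0|$ with $|M'_n|/|M'_0|\in(1/2,2)$, so telescoping gives $2^{1-b}$, not $2^{-b}$. The expansion rule cannot remove that factor, nor the $ne^{-\lambda}$ mass from your union bound, and the stated bound has no room for either.

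\textbf{Comparison with the paper.} The paper gets $(1+\delta)^n$ by a purely algebraic move rather than a relative-deviation estimate. It takes $\prod_j(D(s_j)+\delta)\le 2^{-b}$ as a consequence of the $b$ expansions. It then uses $D(s_j)+\delta\ge(1+\delta)D(s_j)$ (since $D(s_j)\le 1$), written as a binomial expansion, to conclude $(1+\delta)^n\prod_j D(s_j)\le 2^{-b}$. That is the clean additive-to-multiplicative conversion you were looking for. However, its premise has the same orientation problem as your Step 1. The upper Hoeffding tail \eqref{eq3.1} gives $\rho_j\le D(s_j)+\delta$, up to $O(2^{-N})$, so the $b$ expansions yield $\prod_j(D(s_j)+\delta)\ge\tfrac12\,2^{-b}$, which is a lower bound, not an upper bound.

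So the paper's route does not supply the missing inequality either. What the available ingredients honestly give is a bound such as $2^{1-b}\prod_j(1+\delta/\rho_j)$. That still makes collisions negligible once $b$ is enlarged by $1+\sum_j\log_2(1+\delta/\rho_j)$ bits, but it does not give the claimed improvement below $2^{-b}$.
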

\begin{proof}
    Let the additional generated token sequence of suffix be $s_1, s_2, \cdots, s_n$. In each step, the probability that an element $m$ in the candidate list is retained during filtering is $D(s_j)$. Given that $m$ exists in the candidate list, the probability of $m\|suf$ appearing is:
    \begin{align}
        \Pr\left[m\|suf \text{ appears} \mid m \text{ has appeared}\right] = \prod_{i = 1}^n D(s_j).
    \end{align}
    Since the candidate list expanded $b$ times during this process, we have:
    \begin{equation}\label{eq3.3}
        \prod_{i = 1}^n \left(D(s_i) + \delta \right)\leq 2^{-b},
    \end{equation}
    where $\delta$ is the deviation from the Hoeffding inequality. Equation \ref{eq3.3} can be expanded as:
    \begin{equation}
        \label{eq3.4}
        \prod_{i = 1}^n \left(D(s_i) + \delta \right)  = \prod_{i = 1}^n D(s_i) + \sum_{k=1}^n \delta^k \cdot \sum_{\substack{S \subseteq \{1,\dots,n\} \\ |S| = k}} \prod_{j \in S} D(s_j) \leq 2^{-b}.
    \end{equation}
    Since
    \begin{align}
       \sum_{k=1}^n \delta^k \cdot \sum_{\substack{S \subseteq \{1,\dots,n\} \\ |S| = k}} \prod_{j \in S} D(s_j) &\geq \sum_{k=1}^n  \binom{n}{k}\delta^k \prod_{j = 1}^n D(s_j) \notag\\&= \left((1 + \delta)^n - 1\right)\prod_{i = 1}^n D(s_i),
    \end{align}
    combining this with Eq. \ref{eq3.4}, we obtain:
    \begin{equation}
    \label{eq3.6}
        (1 + \delta)^n\prod_{i = 1}^n D(s_i) \leq 2^{-b}.
    \end{equation}
    Here we computes the maximum value of $\delta$: as the maximum candidate list size is $2^N$, we know that $|M_1|\geq 2^{N-1}$. By Hoeffding inequality, ensuring $\exp\left({-{2\delta^2\cdot \left(2^{N-1}\right)}}\right) \leq \exp\left(-\lambda\right)$ yields $\delta \leq \sqrt{\frac{\lambda}{2^N}}$. Using the conclusion of Eq. \ref{eq3.6}:
    \begin{align}
        \Pr\left[m\|suf \text{ appears} \mid m \text{ exists}\right]  = \prod_{i = 1}^n D(s_i) & \notag\\
        \leq \frac{2^{-b}}{(1 + \delta)^n} \leq \frac{2^{-b}}{\left(1 + \sqrt{\frac{\lambda}{2^N}}\right)^n}.&
    \end{align}
    By letting the right-hand side be less than $\exp(-\lambda)$, we maintain a negligible probability of collision when the suffix length $b$ satisfies:
    \begin{equation}
         b \geq \left\lceil \lambda\log_2e + n\log_2\left(1 + \sqrt{\frac{\lambda}{2^N}}\right) \right\rceil.
    \end{equation}

\end{proof}
That is, when the suffix $suf$ has length $b$, the probability that $m\|suf$ appears in the subsequent candidate-message set is at most $\frac{2^{-b}}{\left(1 + \sqrt{\frac{\lambda}{2^N}}\right)^n},$ where $n$ is the additional number of tokens generated in this process. Equivalently, if the suffix $suf$ has length at least $ \left\lceil \lambda\log_2 e \;+\; n\log_2\!\left(1 + \sqrt{\frac{\lambda}{2^N}}\right) \right\rceil,$ then the probability that $m\|suf$ appears in the subsequent candidate-message set is negligible.

\subsection{Proof of Security}
\label{security_proof}
\begin{figure}[htbp]
\begin{theobox}{$G_0$: Steganographic Encoding Algorithm}
    \begin{algorithmic}[1]
     \STATE $D \longleftarrow \mathsf{Model}(h)$\COMMENT{Obtain distribution}
    \STATE $\{r\}_{i = 1}^{2|M|} \longleftarrow \mathsf{PRG}_{sk}(\cdot)$\COMMENT{Generate pseudorandom numbers}
    \STATE {$\{s_i\}_{i = 1}^{|M|} \longleftarrow \mathsf{AliSample}_{|M|}\left(D,\{r\}_{i = 1}^{2|M|}\right)$ }\COMMENT{Alias sampling}
    \FOR{$i=1$ {\bf to} $|M|$}
        \STATE Let $F(m_i) := s_i$ \COMMENT{Construct mapping $F$}
    \ENDFOR
    \RETURN $F(m^*_{1:l})$ \COMMENT{Select sample $s^*$ corresponding to the real secret message}
    \end{algorithmic}
    \end{theobox}
\begin{theobox}{$G_1$: Using True Random Number}
    \begin{algorithmic}[1]
     \STATE $D \longleftarrow \mathsf{Model}(h)$\COMMENT{Obtain distribution}
    \STATE $\{r\}_{i = 1}^{2|M|} \longleftarrow_\$ \mathsf{Unif}[0,1] $\COMMENT{Get true random numbers}
    \STATE {$\{s_i\}_{i = 1}^{|M|} \longleftarrow \mathsf{AliSample}_{|M|}\left(D,\{r\}_{i = 1}^{2|M|}\right)$ }\COMMENT{Alias sampling}
    \FOR{$i=1$ {\bf to} $|M|$}
        \STATE Let $F(m_i) := s_i$ \COMMENT{Construct mapping $F$}
    \ENDFOR
    \RETURN $F(m^*_{1:l})$ \COMMENT{Select sample $s^*$ corresponding to the real secret message}
    \end{algorithmic}
    \end{theobox}
\begin{theobox}{$G_2$: Removing Secret Message}
    \begin{algorithmic}[1]
     \STATE $D \longleftarrow \mathsf{Model}(h)$\COMMENT{Obtain distribution}
    \STATE $\{r\}_{i = 1}^{2|M|} \longleftarrow_\$ \mathsf{Unif}[0,1] $\COMMENT{Get true random numbers}
    \STATE {$\{s_i\}_{i = 1}^{|M|} \longleftarrow \mathsf{AliSample}_{|M|}\left(D,\{r\}_{i = 1}^{2|M|}\right)$ }\COMMENT{Alias sampling}
    \STATE Choose $i \longleftarrow_\$ \{1,\dots,|M|\}$ \COMMENT{Uniformly select one of the samples}
    \STATE $s^* \longleftarrow s_i$
    \RETURN $s^*$
    \end{algorithmic}
    \end{theobox}
\begin{theobox}{$G_3$: Normal Model Sampling}
    \begin{algorithmic}[1]
     \STATE $D \longleftarrow \mathsf{Model}(h)$\COMMENT{Obtain distribution}
    \STATE $\{r\}_{i = 1}^{2} \longleftarrow_\$ \mathsf{Unif}[0,1] $\COMMENT{Get true random numbers}
    \STATE $s^* \longleftarrow \mathsf{AliSample}_{1}\left(D,\{r\}_{i = 1}^{2}\right)$ \COMMENT{Alias sampling}
    \RETURN $s^*$
    \end{algorithmic}
    \end{theobox}
    \caption{The games used in security proof.}
    \label{games}
\end{figure}

To prove the security of the proposed scheme, we construct the following sequence of games:
\begin{itemize}
    \item $G_0$: The steganography encoding algorithm $\mathsf{Encode}_{sk,\mathsf{Model},2^N}(h,m^*)$ using key-controlled pseudorandom numbers;
    \item $G_1$: $\mathsf{Encode}_{\mathsf{Model},2^N}(h,m^*)$ using truly random numbers;
    \item $G_2$: $\mathsf{Encode}_{\mathsf{Model},2^N}(h)$ using truly random numbers, with the secret message $m^*$ omitted;
    \item $G_3$: $\mathsf{Encode}_{\mathsf{Model},1}(h)$ using truly random numbers to draw a single sample via alias sampling, with the secret message $m^*$ omitted. The detailed procedure of $\mathsf{AliSample}$ is given in Appendix~\ref{alias-method};
\end{itemize}

We illustrate the games involved in the hybrid proof in Figure~\ref{games}. We can prove the following proposition:

\begin{proposition}
    For any history $h$, message $m^*$, secret key $sk$ generated by $\mathsf{KeyGen}$, the distribution of the stegotext generated by the encoding algorithm $\mathsf{Enc}_{sk,\mathsf{Model},2^N}(h,m^*)$ is computationally indistinguishable from the distribution of covertext generated by $\mathsf{Model}(h)$.
\end{proposition}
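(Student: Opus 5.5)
The plan is a hybrid argument over the games $G_0,G_1,G_2,G_3$ of Figure~\ref{games}. We show that consecutive games are indistinguishable, first for a single generation step and then for the full autoregressive process, to every PPT adversary with oracle access.

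\textbf{Step 1 ($G_0 \approx_c G_1$).} This is a reduction to PRG security. Suppose a PPT $\mathcal{A}$ distinguishes $\mathsf{Enc}_{sk,\mathsf{Model},2^N}$ from its truly random counterpart. We build a distinguisher $\mathcal{B}$ that receives a string $w$, either $\mathsf{PRG}_{sk}$ output or uniform. $\mathcal{B}$ runs the encoding algorithm with $w$ in place of the pseudorandom numbers and answers $\mathcal{A}$'s queries $(m,h)$. It can do so because the model, the alias sampler, the filtering/expanding rules and the suffix derivation are all public given the randomness. $\mathcal{B}$ then outputs whatever $\mathcal{A}$ outputs. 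Its advantage equals $\mathcal{A}$'s advantage, which is negligible. The number of random values drawn is at most $2\cdot 2^N$ per step over polynomially many steps, so it stays polynomial.

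\textbf{Step 2 ($G_1 \equiv G_2$), the crux.} This step is perfect equivalence. With true randomness, the samples $s_1,\dots,s_{|M|}$ are i.i.d. $\sim D$, since each alias sample uses its own fresh pair $(r_{2i-1},r_{2i})$. The output $F(m^*_{1:l})$ is the sample $s_{i^*}$ at the index $i^*$ of the true prefix in $M$. Conditioned on the past, $i^*$ is a fixed index independent of the fresh randomness, so $s_{i^*}\sim D$. This marginal law is the same as that of $s_i$ for an independent uniform $i$, as in $G_2$.

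The real obstacle is the multi-step setting. The candidate list $M$, and hence the position $i^*$, depend on earlier samples, and the adversary sees only tokens. I would handle this by induction on the step count. Conditioned on the full transcript of previously emitted tokens (and the internal state $M$, which is a deterministic function of earlier randomness), the fresh randomness at the current step is independent of everything before. Therefore the next emitted token has conditional law exactly $D=\mathsf{Model}(h)$ given the history, in both games. Chaining these conditionals shows that the joint distribution of the token sequence is identical. One point needs care: the stopping time, i.e.\ the number of tokens emitted, depends on $m^*$ via the expansions. Since the security definition compares outputs of the same length, I would argue token-by-token for any fixed prefix length. Alternatively, I would note that the adversary's view is a prefix of an autoregressive sample either way.

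\textbf{Step 3 ($G_2 \equiv G_3$).} A uniformly chosen element of an i.i.d. $D$ sample list is distributed as $D$. A single alias sample with two uniform numbers is also distributed exactly as $D$, by the correctness of the alias method in Appendix~\ref{alias-method}. So the two games are identical step by step, and by the same conditional chaining they are identical over whole sequences.

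Combining the three steps, the adversary's total advantage is at most the PRG advantage, which is $\mathsf{negl}(\lambda)$, which proves the proposition.
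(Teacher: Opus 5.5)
Your proposal is correct and follows the paper's own hybrid argument $G_0 \approx G_1 = G_2 = G_3$: a PRG reduction, then the observation that the alias sample at the true prefix's index, among i.i.d.\ samples, is distributed exactly as $D$, then correctness of the alias method. Your single PRG reduction over the whole randomness stream, together with the explicit conditioning on the past to chain the per-step equalities, makes rigorous the step the paper dispatches with ``by the autoregressive nature of the generation process and the chain rule.''
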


\begin{proof}
$G_0 \approx G_1$: Assume there exists a polynomial-time adversary $\mathcal{A}$ capable of effectively distinguishing the output of $G_0$ from that of $G_1$. We can then construct another polynomial-time adversary $\mathcal{A}'$ that can effectively distinguish between a true random number sequence and a pseudorandom number sequence. The workflow of $\mathcal{A}'$ is as follows:
\begin{itemize}
    \item Input a pseudorandom number sequence $\{r\}_{i = 1}^{2|M|}$ and a true random number sequence $\{r'\}_{i = 1}^{2|M|}$;
    \item $\mathcal{A}'$ runs $G_0$ using $\{r\}_{i = 1}^{2|M|}$ and $\{r'\}_{i = 1}^{2|M|}$ as the randomness, respectively, to obtain outputs $s$ and $s'$;
    \item $\mathcal{A}'$ invokes the adversary $\mathcal{A}$ to distinguish between $s$ and $s'$.
\end{itemize}
Since the $G_0$ algorithm using true random numbers is essentially identical to $G_1$, $s$ and $s'$ correspond to the outputs of $G_0$ and $G_1$ under the same history $h$ and secret message $m^*$, respectively. Thus, they can be effectively distinguished by the adversary $\mathcal{A}$. Consequently, $\mathcal{A}'$ can further distinguish the true random number sequence from the pseudorandom number sequence by invoking $\mathcal{A}$. However, by the definition of pseudorandom property, such an adversary $\mathcal{A}'$ does not exist. An adversary capable of distinguishing between algorithms $G_0$ and $G_1$ must necessarily be able to distinguish between true and pseudorandom sequences; therefore, such an adversary cannot operate in polynomial time. Since there is no polynomial-time adversary that can effectively distinguish between $G_0$ and $G_1$, by the definition of computational indistinguishability, the outputs of $G_0$ and $G_1$ are computationally indistinguishable.

$G_1 = G_2$: In $G_1$, the encoder outputs $s^* = F(m^*_{1:l})$, where $F$ is constructed by drawing i.i.d.\ samples from $D=\mathsf{Model}(h)$ using true randomness. Therefore $s^*$ is distributed exactly as a fresh sample from $D$, i.e., for any token $s$, $\Pr[s \leftarrow G_1(h,m^*)]=D(s)$. In $G_2$, the encoder instead chooses a uniformly random element from the same multiset of i.i.d.\ samples; by symmetry, the output distribution is also $D$. Hence $G_1$ and $G_2$ are identically distributed.

$G_2 = G_3$: $G_2$ randomly selects one element from the multiset of $|M|$ i.i.d.\ samples, while $G_3$ draws a single fresh sample from the distribution. It was proven in the previous step that the probability of $G_2$ outputting token $s$ is $D(s)$, following the distribution predicted by the model. Since $G_3$ utilizes alias sampling, the output of this sampling algorithm must follow the given input distribution $D$; thus, the probability of it outputting token $s$ remains $D(s)$.

In summary, we have established that $G_0 \approx G_1 = G_2 = G_3$. This demonstrates that the output of our steganographic encoding algorithm is computationally indistinguishable from the model's normal output at any single generation step. By the autoregressive nature of the generation process and the chain rule, it follows that the full text sequences generated by $\mathsf{Enc}_{sk,\mathsf{Model},2^N}(h,m)$ and $\mathsf{Model}(h)$ are computationally indistinguishable.

\end{proof}

\subsection{Capacity Analysis}
\label{capacity-analysis}
First, we define the embedding capacity within the steganographic scheme. In many prior schemes that recover payload information at the token level, capacity is typically defined as the average ratio of the number of bits that can be embedded and accurately extracted per token to the information of that token, which can be viewed as the information utilization rate. As our steganography scheme cannot decode any bit from a single token, the information utilization rate can be defined as follows:

\begin{definition}[Information Utilization Rate of steganographic scheme]
    Let the generated stegotext be a token sequence $s_1,s_2,\cdots,s_{n_{\mathrm{all}}}$. For each step $i$, let
    $p_i = \Pr[s_i \mid h_{<i}] = \mathsf{Model}(h_{<i})(s_i)$
    be the conditional probability assigned by the cover distribution. The total information content of the stegotext is
    \begin{align}
        I = \sum_{i = 1}^{n_{\mathrm{all}}} -\log_2(p_i).
    \end{align}
    Let $|m^*|$ denote the payload length (excluding the validation suffix). The information utilization rate is
    \begin{align}
        R = \frac{|m^*|}{I}.
    \end{align}
\end{definition}

Next, we prove the lower bound of the information utilization rate for the scheme; that is, the following proposition holds:

\begin{proposition}
    To embed a payload $m^*$, given that the maximum candidate list size is bounded by $2^N$ and the validation suffix length is $b$, a lower bound on the information utilization rate is
    $R \geq \left(1 - \frac{b - N}{|m^*|}\right)\cdot\left(1 - \frac{n_{\mathrm{all}}\sqrt{\frac{\lambda}{2^N}}}{\ln2 \cdot I}\right),$
    where $\lambda$ is the security parameter, $n_{\mathrm{all}}$ is the total number of generated tokens, and $I$ is the total information content of the stegotext.
\end{proposition}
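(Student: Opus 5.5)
We relate the number of expansion steps (bits appended) to the per-step information content, using the same Hoeffding deviation $\delta \le \sqrt{\lambda/2^N}$ as in the proof of Proposition~\ref{prop3.1}.

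\textbf{Step 1 (bookkeeping).} The list starts at $M_0=\{0,1\}^N$, so initially $l=N$. Each expansion doubles the list and increments $l$. Encoding stops once $l$ exceeds $|m^*|+b$. The total number of expansions $E$ across all $n_{\mathrm{all}}$ steps therefore satisfies $N+E \ge |m^*|+b$, and $E$ is at most that quantity up to a constant slack. Write $|M_i'|$ for the list size before the $i$-th filtering and $|M_i|$ for the size after it. The list always satisfies $2^{N-1}<|M'_i|\le 2^N$ at filtering time, so the telescoping product gives
\begin{equation*}
E = \log_2\frac{|M'_{n_{\mathrm{all}}}|}{|M_0|} + \sum_{i=1}^{n_{\mathrm{all}}}\log_2\frac{|M'_{i-1}|}{|M_i|},
\end{equation*}
up to an $O(1)$ boundary term. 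I will absorb this boundary term into the $N$ appearing in the factor $(b-N)$.

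\textbf{Step 2 (per-step filtering).} By the Hoeffding bound~\eqref{eq3.1} with $|M'_{i-1}|\ge 2^{N-1}$, except with probability $e^{-\lambda}$ per step we have $|M_i|/|M'_{i-1}|\le p_i+\delta$. Hence
\begin{equation*}
\log_2\frac{|M'_{i-1}|}{|M_i|}\ge -\log_2(p_i+\delta).
\end{equation*}
I then lower bound $-\log_2(p_i+\delta)=-\log_2 p_i-\log_2(1+\delta/p_i)$. Using $\ln(1+x)\le x$ directly gives $\delta/p_i$, which is bad when $p_i$ is small. The target form, with $n_{\mathrm{all}}\delta/\ln 2$, suggests a per-step loss of at most $\delta/\ln 2$. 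I will obtain this by concavity instead: the map $x\mapsto \log_2 x$ has slope at least $1/\ln 2$ on $(0,1]$. Applying this as $\log_2(p_i+\delta)-\log_2 p_i\le \delta/(p_i\ln 2)$ does not suffice. The cleaner route is to use the mean value theorem on the interval $[p_i, p_i+\delta]$ and to note that the retained fraction is capped at $1$. I expect this to be the main obstacle, and I may need to reparametrize the deviation as multiplicative, bounding the ratio by $p_i(1+\delta)$ as in the manipulation giving~\eqref{eq3.6}. That reparametrization gives exactly $\log_2(1+\delta)\le \delta/\ln 2$ per step. Summing over steps yields
\begin{equation*}
E \ge I - \frac{n_{\mathrm{all}}\delta}{\ln 2}.
\end{equation*}

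\textbf{Step 3 (assemble).} Since the payload bits are $|m^*|\ge E-(b-N)$ after accounting for the suffix and the initial $N$ bits, we get $|m^*|+b-N \ge$ roughly $E \ge I(1-\frac{n_{\mathrm{all}}\delta}{\ln2\, I})$. Dividing through gives
\begin{equation*}
R=\frac{|m^*|}{I}=\frac{|m^*|}{|m^*|+b-N}\cdot\frac{|m^*|+b-N}{I}.
\end{equation*}
The first factor is at least $1-\frac{b-N}{|m^*|}$, via $\frac{1}{1+x}\ge 1-x$. The second factor is at least $1-\frac{n_{\mathrm{all}}\sqrt{\lambda/2^N}}{\ln 2\cdot I}$, after substituting $\delta\le\sqrt{\lambda/2^N}$.

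A union bound over the $n_{\mathrm{all}}$ Hoeffding events shows that the bound holds with probability at least $1-n_{\mathrm{all}}e^{-\lambda}$. Besides the per-step logarithm estimate in Step 2, the other delicate point is the direction of the boundary terms in Step 1, namely whether the initial $N$ bits count toward $E$. I will fix the convention so that the sign matches the $(b-N)$ in the statement.
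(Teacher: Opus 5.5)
Your plan follows the paper's route. You bound each filtering ratio by $(1+\delta)p_i$, use $\ln(1+x)\le x$ to charge at most $\delta/\ln 2$ bits per token, sum to $n_{\mathrm{all}}\delta/\ln 2$, and account for the free initial $N$ bits and the suffix through the factor $1-\frac{b-N}{|m^*|}$. Your telescoping identity and the split via $\frac{1}{1+x}\ge 1-x$ are more explicit than the paper, which just asserts the coefficient.

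\textbf{The gap in Step~2.} It is exactly the step you flagged, and the reparametrization does not close it. Hoeffding with $|M'_{i-1}|\ge 2^{N-1}$ at confidence $e^{-\lambda}$ gives only the additive bound $|M_i|/|M'_{i-1}|\le p_i+\delta$ with $\delta=\sqrt{\lambda/2^N}$.
\begin{itemize}
\item For the additive bound, the mean value theorem gives $\log_2(p_i+\delta)-\log_2 p_i=\delta/(\xi\ln 2)$ for some $\xi\in(p_i,p_i+\delta)$. This exceeds $\delta/\ln 2$ whenever $p_i+\delta<1$; the cap at $1$ helps only when $p_i+\delta\ge 1$.
\item For a multiplicative deviation, the Hoeffding tail is $\exp(-2^N\delta^2p_i^2)$. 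So the certified deviation is $\delta_i=\sqrt{\lambda/2^N}/p_i$, and the per-step loss is $\log_2\bigl(1+\sqrt{\lambda/2^N}/p_i\bigr)$.
\end{itemize}
No sharper concentration inequality can give a $p_i$-independent loss. The true candidate always survives, so $|M_i|/|M'_{i-1}|\ge 2^{-N}$. A token with $p_i<2^{-N}$ therefore adds $-\log_2 p_i$ to $I$ while losing at least $\log_2(2^{-N}/p_i)$ bits of it, and nothing in the hypotheses bounds the $p_i$ from below.

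\textbf{The paper has the same gap.} It works with multiplicative deviations of size $\delta_i=\sqrt{\lambda/(2^Np_i^2)}$. It then infers $\sum_i\delta_i^2\le\lambda/2^N$ from $\sum_i p_i^2\delta_i^2\le\lambda/2^N$ ``since $0\le p_i\le 1$'', which is the reverse of what $p_i\le 1$ gives ($p_i^2\delta_i^2\le\delta_i^2$). With its own choice, $\delta_i^2=\lambda/(2^Np_i^2)\ge\lambda/2^N$ for every $i$. So copying the paper would reproduce the gap rather than fill it.

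\textbf{What your argument actually proves.} With probability at least $1-n_{\mathrm{all}}e^{-\lambda}$, you get the same product form with $n_{\mathrm{all}}\sqrt{\lambda/2^N}$ replaced by $\bigl(\sqrt{\lambda/2^N}+2^{1-N}\bigr)\sum_i p_i^{-1}$. The $2^{1-N}$ appears because the retained count is $1+\mathrm{Bin}(|M'_{i-1}|-1,p_i)$, not $\mathrm{Bin}(|M'_{i-1}|,p_i)$.

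\textbf{The Step~1 slack.} It cannot be fixed by a choice of convention. The loop only guarantees $N+E-e_{n_{\mathrm{all}}}\le |m^*|+b$, where $e_{n_{\mathrm{all}}}\le N$ counts the expansions after the last token. The boundary term $\log_2(|M'_{n_{\mathrm{all}}}|/|M_0|)$ lies in $(-1,0]$. Together these raise the overhead from $b-N$ to $b-N+1+e_{n_{\mathrm{all}}}$, so the slack enters with the unfavorable sign.
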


\begin{proof}
    First, we determine the capacity lower bound on a single token. Without loss of generality, let the probability of the token be $p$. By Hoeffding's inequality, we have:
    \begin{align}
        \Pr\left[\frac{|M_2|}{|M_1|} - p\geq \delta p\right] \leq \exp\left(-{2^N\delta^2 p^2}\right).
    \end{align}
    By setting the right-hand side to be less than $\exp(-\lambda)$, rendering the probability of the event on the left-hand side negligible, we derive an upper bound for $\delta$ as $\sqrt{\frac{\lambda}{2^Np^2}}$. Therefore, after filtering based on a single token, the proportion of remaining candidate messages is at most $(1+\delta) p$.

    Next, consider the sequential generation of $n_{\mathrm{all}}$ tokens $s_1, s_2, \cdots, s_{n_{\mathrm{all}}}$. Let the probabilities of these tokens be $p_1, p_2, \cdots, p_{n_{\mathrm{all}}}$, respectively. In the worst-case scenario, if $\prod_{i = 1}^{n_{\mathrm{all}}} (1+\delta_i)p_i \leq \frac{1}{2}$, an expansion step is triggered. The probability of the product $\prod_{i = 1}^{n_{\mathrm{all}}} (1+\delta_i)p_i$ occurring is at most $\exp\left(-2^N\sum_{i = 1}^{n_{\mathrm{all}}} p_i^2\delta_i^2 \right)$. By constraining this probability upper bound to be less than $\exp(-\lambda)$, we obtain a constraint on $\delta_i$, namely $\sum_{i = 1}^{n_{\mathrm{all}}} p_i^2\delta_i^2 \leq \frac{\lambda}{2^N}$. Consequently, the information utilization rate implies:
    \begin{align}
        R = \frac{\sum_{i = 1}^{n_{\mathrm{all}}} -\log_2((1 + \delta_i)p_i)}{\sum_{i = 1}^{n_{\mathrm{all}}} -\log_2(p_i)}.
    \end{align}
    Considering the summation in the numerator, we can separate the product terms and apply the standard inequality $\ln(1+x)\leq x$:
    \begin{align}
        -\log_2((1 + \delta_i)p_i) =& -\log_2(p_i) - \frac{1}{\ln2}\ln(1 + \delta_i) \notag \\
        \geq & -\log_2(p_i) - \frac{\delta_i}{\ln 2}.
    \end{align}
    At this point, the information utilization rate $R$ simplifies to:
    \begin{align}
        R = 1 - \frac{\sum_{i = 1}^{n_{\mathrm{all}}} \frac{\delta_i}{\ln 2}}{\sum_{i = 1}^{n_{\mathrm{all}}} -\log_2(p_i)}.
    \end{align}
    Given that the constraint $\sum_{i = 1}^{n_{\mathrm{all}}} p_i^2\delta_i^2 \leq \frac{\lambda}{2^N}$ must hold for any $\delta_i$, and $0\leq p_i\leq 1$, it follows that $\sum_{i = 1}^{n_{\mathrm{all}}} \delta_i^2 \leq \frac{\lambda}{2^N}$. Viewing $\sum_{i = 1}^{n_{\mathrm{all}}} \delta_i^2$ as $\sum_{i = 1}^{n_{\mathrm{all}}} 1^2 \cdot \delta_i^2$ and applying the Cauchy-Schwarz inequality, we have:
    \begin{align}
        \left(\sum_{i = 1}^{n_{\mathrm{all}}} 1 \cdot \delta_i\right)^2 \leq \left(\sum_{i = 1}^{n_{\mathrm{all}}} 1^2\right) \left(\sum_{i = 1}^{n_{\mathrm{all}}} \delta^2 \right) \leq \frac{n_{\mathrm{all}}^2\lambda}{2^N}.
    \end{align}
    This implies $\sum_{i = 1}^{n_{\mathrm{all}}} {\delta_i} \leq n_{\mathrm{all}}\sqrt{\frac{\lambda}{2^N}}$. Denoting the total information content of tokens $s_1, s_2, \cdots s_{n_{\mathrm{all}}}$ as $I = \sum_{i =1 }^{n_{\mathrm{all}}} -\log_2(p_i)$, the lower bound for the information utilization rate is:
    \begin{align}
        R = 1 - \frac{\sum_{i = 1}^{n_{\mathrm{all}}} {\delta_i}}{\ln2 \cdot I} \geq 1 - \frac{n_{\mathrm{all}}\sqrt{\frac{\lambda}{2^N}}}{\ln2 \cdot I}.
    \end{align}
    Since $\sqrt{\frac{\lambda}{2^N}}$ is extremely small for typical choices of $N$ (e.g., $N\ge 20$), the overall utilization rate remains relatively close to 1.

    According to Proposition \ref{prop3.1}, unique decoding is achieved with high probability when the length of the embedded validation suffix $b$ is at least $\left\lceil \lambda\log_2e + n_{suf}\log_2\left(1 + \sqrt{\frac{\lambda}{2^N}}\right) \right\rceil$, where $n_{suf}$ is the number of tokens used to embed $suf$. In our overall encoding process, the payload is embedded starting from an initial $N$-bit prefix (because $|M_0|=2^N$) and ends after the required suffix bits are embedded. This introduces an overhead of
    \begin{equation}
    K = \left\lceil \lambda\log_2e + n_{suf}\log_2\left(1 + \sqrt{\frac{\lambda}{2^N}}\right) \right\rceil - N
     \end{equation}
    additional bits beyond the initial $N$-bit prefix. Therefore, the effective utilization rate is multiplied by a coefficient $1 - \frac{K}{|m^*|}$, which approaches $1$ as $|m^*|$ grows.
\end{proof}
    In conclusion, under the stated conditions, the lower bound of the information utilization rate for the steganography scheme is
    \begin{equation}
    R \geq \left(1 - \frac{K}{|m^*|}\right)\cdot\left(1 - \frac{n_{\mathrm{all}}\sqrt{\frac{\lambda}{2^N}}}{\ln2 \cdot I}\right).
    \end{equation}
   Hence, the lower bound on the average embedding capacity per token is
$\left(1 - \frac{K}{|m^*|}\right)\!\cdot\!\left(H - \frac{\sqrt{\lambda / 2^N}}{\ln 2}\right),$
which is asymptotically optimal with respect to the average entropy of tokens $H$.

\section{Experiment \& Discussion}

\subsection{Experiment Settings}

Although our steganographic scheme is not restricted to a specific carrier type, provably secure steganography has been most actively studied in text generation, as language models represent the best known technique for approximating human communication~\cite{meteor2021}. Therefore, to enable a more comprehensive and fair comparison with existing provably secure methods, our experiments primarily focus on large language model–based text generation scenarios.

\textbf{Baseline}.
We evaluated our steganography scheme against Random Sampling, representing the ideal steganography algorithm, and seven recent provably secure generative steganography algorithms, including METEOR~\cite{meteor2021} and its capacity-enhanced variant METEOR (R.), DISCOP~\cite{ding2023discop} and DISCOP (R.), SparSamp~\cite{wang2025sparsamp}, Shimmer~\cite{shimmer}, and the capacity-optimized group algorithm from FDPSS~\cite{liao2025framework}. Each baseline was introduced in Section~\ref{relatedworks} and tested using its default parameters.

\textbf{Model}.
We evaluate our method on three purely text-generation LLMs:\textsc{Mistral v0.3} \cite{mistral}, \textsc{Qwen2} \cite{qwen}, and \textsc{Llama3} \cite{llama3}, each with approximately 7 billion parameters. As models of this scale already exhibit strong instruction-following ability, all experiments are conducted in a zero-shot setting. In the main experiments, prompts are randomly sampled from the InstructWild dataset \cite{wild} and formatted with the tokenizer’s \texttt{apply\_chat\_template} function before being fed to the models. Because both our method and the provably secure baseline steganographic algorithms are distribution-agnostic and do not impose specific assumptions on the model distribution, we do not artificially constrain the decoding distribution. For each test, we ask the models to generate 1,000 samples using the full vocabulary, a base temperature of 1.0, and a maximum generation length of 800 tokens. In the main expirement, we set $N$=20 and use a 20-bit validation suffix (i.e., $b$=20), which targets a concrete decoding-failure upper bound on the order of $10^{-6}$.
All experiments were conducted on 2 $\times$ NVIDIA A5000 GPUs (32GB RAM) and 24 $\times$ Intel Xeon w5-3423 CPUs.
\begin{table*}[htb]
    \centering
     \resizebox{\textwidth}{!}{{
     \color{black}
    \begin{tabular}{cccccccccccc}
    \toprule[1.5pt]
     \multirow{2}{*}{\textbf{Model}}&\multirow{2}{*}{\textbf{Algorithm}}&  \multicolumn{3}{c}{\textbf{Capacity}}& \multicolumn{1}{c}{\textbf{Time}} & \multicolumn{4}{c}{\textbf{Linguistic Quality}}&\textbf{Imperceptibility}&\textbf{Correctness}\\
         \cmidrule(lr){3-5}
         \cmidrule(lr){6-7}
        \cmidrule(lr){7-10}
                \cmidrule(lr){11-11}
                  \cmidrule(lr){12-12}
         && \makecell{ Entropy\\ bit/token} &  \makecell{Embed \\ bit/token}$\uparrow$ &  Utili.$\uparrow$ &  \makecell{Time\\sec./token}$\downarrow$ &  PPL &  Dist2 &  Dist3 & Dist4 & TS-CSW F1$\downarrow$&SR$\uparrow$\\\midrule
\multirow{9}{*}{\textsc{Mistral}}
& Random Sampling & 0.8223 & - & - & 0.0552& 2.3359 & 0.2447&0.4067 & 0.5207& - & - \\
& METEOR (R.)~\cite{meteor2021}   & 0.8048     & 0.6737 & 0.6367 & 0.0552 & 2.2255& 0.2302& 0.3842 & 0.4906 & $52.39\%_{3.87}$ & 1.000 \\
& METEOR~\cite{meteor2021}   & 0.7794     & 0.6000 & 0.6253 & 0.0552& 2.3821 & 0.2374 & 0.3876& 0.4934 & $53.97\%_{2.99}$  & 1.000 \\
& DISCOP (R.)~\cite{ding2023discop}  & 0.7558 & 0.7439 & 0.8047& 0.0695 & 2.0709 & 0.2178 & 0.3541 & 0.4532 & $51.11\%_{2.37}$ & 1.000 \\
& DISCOP~\cite{ding2023discop}   & 0.7937      &0.4102 &0.4002 & 0.0606 & 2.1052& 0.2458 & 0.3899& 0.4860 & $50.08\%_{3.02}$ & 1.000 \\
& SparSamp~\cite{wang2025sparsamp}  & 0.7807       & 0.8100 & 0.8701 & 0.0544 & 2.3445 & 0.2366 & 0.3862 & 0.4854 & $51.10\%_{2.79}$ & 1.000 \\
& Shimmer~\cite{shimmer}  & 0.8163  &  0.6433& 0.7835 &0.0552& 2.0420 & 0.2014 & 0.3512& 0.4527 &$52.67\%_{2.35}$& 1.000\\
& Group~\cite{liao2025framework}  & 0.9827       & 0.6519 & 0.5875 & 0.0553 & 2.9868 & 0.2773& 0.4387& 0.5482 & $55.34\%_{3.94}$ & 1.000 \\
\cmidrule{2-12}
& Ours  & 0.8044  & 0.7927& \textbf{0.9835} & 0.0529 & 2.1121 & 0.2312& 0.3926 & 0.5013 & $50.06\%_{3.91}$& 1.000 \\

\midrule
 \multirow{9}{*}{\textsc{Qwen2}}
& Random Sampling  & 1.5591& - & - & 0.0546 & 3.5241 & 0.3445& 0.5635 & 0.7025& - & - \\
& METEOR (R.)~\cite{meteor2021}  & 1.6269  & 0.9634 & 0.5954 & 0.0546 & 3.7602& 0.3643& 0.5770 & 0.7061 & $52.69\%_{2.39}$ & 1.000 \\
& METEOR~\cite{meteor2021}   & 1.6197   & 0.9551 & 0.5933 & 0.0545 & 3.6935 & 0.3648 & 0.5827 & 0.7120 & $53.81\%_{2.90}$& 1.000 \\
& DISCOP (R.)~\cite{ding2023discop}  & 1.6201& 1.2342 & 0.8458 & 0.1296 &3.5792 & 0.3576 & 0.5753 & 0.7104 & $51.84\%_{3.19}$& 1.000 \\
& DISCOP~\cite{ding2023discop}  & 1.6231      & 0.7351 & 0.4396 & 0.1630 & 3.7863& 0.3632 & 0.5724& 0.6954 & $50.29\%_{2.83}$ & 1.000 \\
& SparSamp~\cite{wang2025sparsamp}  & 1.6238    & 1.3562& 0.9069 & 0.0539 & 3.6862 & 0.3634 & 0.5864 & 0.7241 & $50.77\%_{2.68}$ & 1.000 \\
& Shimmer~\cite{shimmer}  & 1.6734   & 1.1528 & 0.8403 & 0.0540 & 3.7181& 0.3852 & 0.5903& 0.7351& $52.71\%_{2.20}$&1.000\\
& Group~\cite{liao2025framework}  & 1.6611   & 1.0927 & 0.6197 & 0.0545 & 4.4333 & 0.3991 & 0.6151& 0.7431 & $53.59\%_{3.25}$ & 1.000 \\
\cmidrule{2-12}
& Ours & 1.5908 & 1.5971& \textbf{0.9906}& 0.0517& 3.8074 & 0.3602 & 0.5975 & 0.7237 & $50.59\%_{2.07}$& 1.000
\\
\midrule
 \multirow{9}{*}{\textsc{Llama3}}
& Random Sampling  & 0.6166& - & - & 0.0582 & 1.8595&   0.1846& 0.3143 & 0.4176 & - &-\\
& METEOR (R.)~\cite{meteor2021}  & 0.6724    & 0.3745 & 0.5418& 0.0587 & 1.8406 & 0.1800 & 0.3041& 0.4057 & $54.10\%_{3.15}$ & 1.000 \\
& METEOR~\cite{meteor2021}   & 0.6703 & 0.3640& 0.5406 & 0.0586 & 1.8525& 0.1893& 0.3210&0.4286 & $53.27\%_{2.72}$ & 1.000 \\
& DISCOP (R.)~\cite{ding2023discop}  & 0.6896    & 0.5135 & 0.7656 & 0.1093 & 1.8371 & 0.1816& 0.3073& 0.4109 & $51.94\%_{1.81}$& 1.000 \\
& DISCOP~\cite{ding2023discop}  & 0.6895        & 0.3376 & 0.4770 & 0.1376 & 1.9113 & 0.1842& 0.3121 & 0.4123& $50.43\%_{1.93}$ & 1.000 \\
& SparSamp~\cite{wang2025sparsamp}  & 0.6653      &  0.6265 & 0.9319 & 0.0575 & 1.8121 & 0.1803 & 0.3021& 0.4020& $51.92\%_{2.46}$&1.000\\
& Shimmer~\cite{shimmer}  & 0.7941 & 0.5733 & 0.7219& 0.0579& 2.0421& 0.2013 & 0.3413& 0.4463& $53.30\%_{3.08}$&1.000\\
& Group~\cite{liao2025framework}    & 0.7277    & 0.4364 & 0.5807 & 0.0584 & 2.1106 & 0.2122 & 0.3490& 0.4545 & $52.51\%_{3.20}$ & 1.000 \\
\cmidrule{2-12}
& Ours   & 0.6717  & 0.6725 & \textbf{0.9924} & 0.0523 & 1.8594 & 0.1893& 0.3154 & 0.4240& $51.21\%_{2.07}$ & 1.000
\\\bottomrule[1.5pt]

    \end{tabular} }}
    \caption{Main results with $N=20$ and $b=20$.}
    \label{tab:main}
\end{table*}

\subsection{Metrics}

We evaluate the performance of our steganography scheme in terms of \textbf{Security, Efficiency, Capacity and Correctness}.

\subsubsection{Security}
Although our steganography scheme's computational security has been theoretically proven in the Method Section, we further validate its security through linguistic steganalysis in adversarial scenarios and evaluate its intuitive imperceptibility based on linguistic quality.

\textbf{Steganalysis}. We employed a classic steganalysis model TS-CSW \cite{TS-CSW}, which is built upon the pre-trained BERT architecture \cite{devlin2018bert}to evaluate the imperceptibility of the generated stegotexts. The detectors were trained on 1,000 samples each of stegotext and corresponding cover text, randomly drawn from outputs of the same language model. The data were split in a 3:1:1 ratio for training, validation, and testing. Training was performed for three epochs with a learning rate of \(1 \times 10^{-4}\). This entire procedure was repeated three times, and the final steganalysis performance was reported as the average F1 score on the test set across all runs.

\textbf{Linguistic Quality.} The generated stegotext should meet the fundamental requirement of human perceptual concealment. We evaluate linguistic quality using perplexity (PPL) and diversity. PPL measures the fluency of generated text, with lower PPL values indicating smoother text:
\begin{equation}
PPL = \exp\left(-\frac{1}{L}\sum_{i=1}^{L}\log \ \Pr[x_i|{\bf x}_{1 : i - 1 }]\right),
\end{equation}
where $\bf x$ represents the generated text and $L$ is the token count of the generated text.

As for text diversity, we used the $ dist_n$ metric. This metric needs to find the unique pieces of tokens in the text and calculate their ratio:
\begin{equation}
    dist_n = \frac{\text{count(unique\, n-grams)}}{\text{count(n-grams)}}.
\end{equation}

\subsubsection{Efficiency}
\textbf{Time.}
To evaluate time efficiency, we measure the time required to generate a stego token, including both language-model inference and the additional computation introduced by the steganographic algorithm (e.g., sampling, and any auxiliary encoding/decoding operations). We report the average runtime per generated token over the full generation process to reduce variance. Lower time indicates higher efficiency.

\subsubsection{Capacity}

\textbf{Entropy.} This represents the theoretical upper bound of the embedding capacity, measured in bits per token. The entropy is computed as:
\begin{equation}
\text{Entropy} = -\frac{1}{L} \sum_{i=1}^{L} \Pr[x_i \mid x_{<i}] \log_2 \Pr[x_i \mid x_{<i}],
\end{equation}
where \(L\) denotes the text length, and \(\Pr[x_i \mid x_{<i}]\) is the conditional probability of token \(x_i\) given the preceding context \(x_{<i}\).

\textbf{Embedding Capacity.} This metric quantifies the average number of secret bits that can be successfully embedded and extracted, expressed as bits per token.

\textbf{Utilization Rate.} Let $B$ denote the number of secret bits that can be embedded and extracted (not including the suffix) in a sentence, and let
$ I = \sum_{i=1}^L -\log_2 \Pr[x_i \mid x_{<i}]$
denote the total information (i.e., the theoretical capacity in bits) of the generated sentence of length $L$. The utilization rate is defined as $R = \frac{B}{I}.$ Here $R \in [0,1]$, and $R = 1$ indicates optimal use of the available entropy.

\subsubsection{Correctness}
\textbf{Success Rate}
We evaluate correctness using the bit-level decoding Success Rate (SR).
Let $\mathbf{x}$ denote the original secret bitstream and $\hat{\mathbf{x}}$ the decoded bitstream, both of length$ |\mathbf{x}|$. We define $SR = \frac{1}{|\mathbf{x}|}\sum_{j=1}^{|\mathbf{x}|} \mathbf{\delta}[x_j , \hat{x}_j],$
where $\mathbf{\delta}[u,v]$ is the indicator function, which takes the value 1 if $u=v$ and 0 otherwise.  Consequently, $SR = 1$ signifies a lossless reconstruction of the secret message, whereas lower values indicate the presence of decoding errors.

\subsection{Evaluation Results and Discussion}

\begin{figure*}[htbp]
	\centering
	\includegraphics[width=\textwidth]{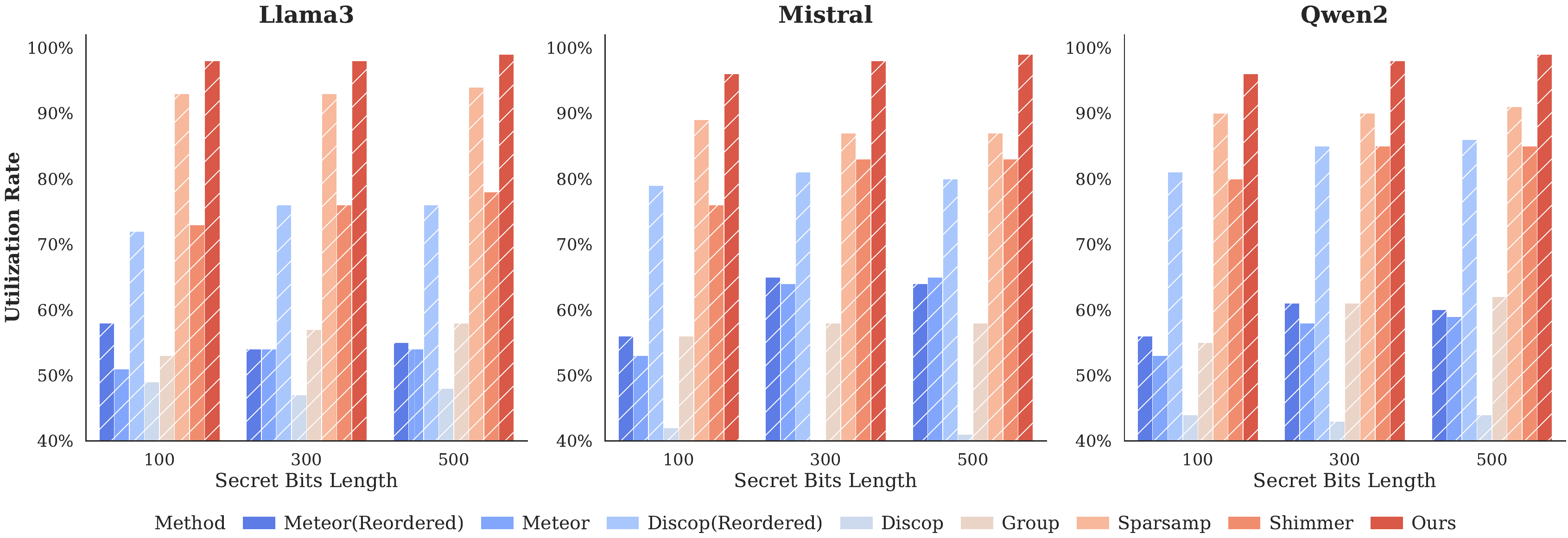}
	\caption{The information utilization rate of several existing provable secure steganography schemes}
	\label{utils}
\end{figure*}

\subsubsection{Capacity}
As analyzed in Section~\ref{capacity-analysis}, we derive lower bounds on both the information utilization rate and the embedding capacity. In the main experimental setting ($b=20$, $N=20$, hence $K=0$), the lower bound on the average embedding capacity per token is
$H - \frac{\sqrt{\lambda / 2^N}}{\ln 2}.$
 This bound exceeds the explicit lower bounds reported by GROUP~\cite{liao2025framework} ($H - \log_2(1 + H \ln 2) - 0.0861$), METEOR~\cite{meteor2021} ($\tfrac{1}{2}H - \tfrac{1}{2}$), DISCOP~\cite{ding2023discop} (minimum entropy), and Shimmer~\cite{shimmer} ($\tfrac{2}{3}H - \tfrac{1}{5}$) in LLM-based experiments, where the entropy H is typically small.
Although some recent SOTA provably secure steganography methods, such as SparSamp~\cite{wang2025sparsamp}, do not provide theoretical capacity analyses, the main experimental results in Table~\ref{tab:main} show that our scheme achieves higher entropy utilization than SparSamp~\cite{wang2025sparsamp}.

We further compare information utilization rate across large language models and secret message lengths (Fig.~\ref{utils}). A consistent trend is that Shimmer~\cite{shimmer} and SparSamp~\cite{wang2025sparsamp} achieve higher capacity than METEOR~\cite{meteor2021}, METEOR (R).~\cite{meteor2021}, DISCOP~\cite{ding2023discop}, DISCOP (R.)~\cite{ding2023discop}, and Group~\cite{liao2025framework}. This gap is structural: METEOR, DISCOP, and Group recover payload information from the current token and the current-step distribution only. In regimes where the next-token distribution is highly peaked, as is typical for LLMs, multiple payload candidates often map to the same high-probability token. Consequently, these methods must either embed only a short common prefix or forgo embedding entirely, thereby wasting part of the available entropy. Shimmer and SparSamp, by contrast, accumulate decoding evidence across multiple steps, which allows them to exploit entropy more effectively. Among the token-local baselines, DISCOP(R.) achieves the highest capacity in our experimental settings~\cite{wang2025sparsamp,shimmer}, but it still falls short of the best multi-step designs.

Despite their advantage, prior high-capacity schemes still incur avoidable capacity loss due to how they represent and update the decoding state. Shimmer~\cite{shimmer} maintains a single interval-valued state; when interval splitting occurs, the valid set becomes a union of disjoint intervals. Keeping all branches would lead to an exponential blowup, so Shimmer must effectively discard branches to suppress splitting, which wastes entropy and reduces utilization. SparSamp~\cite{wang2025sparsamp} improves capacity by accumulating entropy over blocks, but it does not expand a set of parallel decoding candidates when ambiguity persists; instead, it must pause active embedding (or spend additional tokens to resolve the state), again leaving part of the entropy unexploited.

In contrast, our method attains the highest capacity among the compared baselines because it adopts a list-decoding view of steganographic decoding. Rather than enforcing immediate decodability or committing to a single interval/state, we maintain a bounded candidate list of message prefixes. Each generated token filters this list in proportion to its probability mass, and whenever the list becomes too small we expand it to keep the filtering process operating near the entropy rate. The remaining ambiguity is then eliminated by matching a short validation suffix. This filter--expand--match mechanism continuously harvests entropy from every token while keeping computation bounded, which explains the consistent utilization gains observed in Fig.~\ref{utils}

We further examine how the secret message length $|m^*|$ and the candidate list size $2^N$ affect the utilization rate (Fig.~\ref{fig:heatmap-figs}). Across all three large language models, for a fixed $N$, longer secret messages yield higher utilization. This trend is expected because the suffix can be viewed as a fixed overhead used to enable fast and unambiguous decoding; amortizing the same suffix over more payload bits improves overall efficiency (even though suffix bits are excluded from the embedding capacity).
Moreover, larger values of $N$ consistently yield higher utilization, which aligns with our list-decoding intuition: maintaining a larger candidate list provides more flexibility and results in higher embedding capacity.

\begin{figure}[htbp]
    \centering

    \begin{subfigure}[b]{0.99\linewidth}
        \centering
        \includegraphics[width=0.98\linewidth]{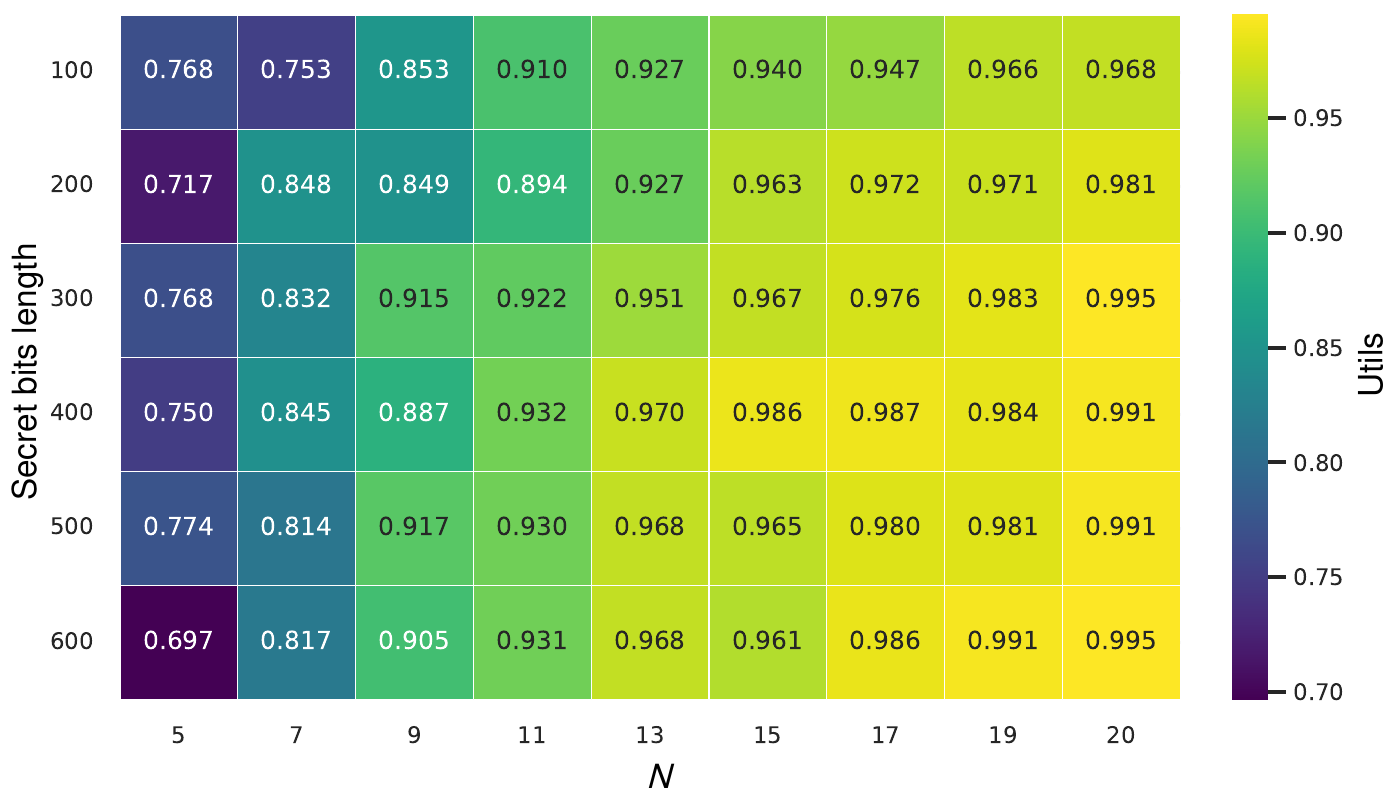}
        \caption{\textit{Llama3}}
        \label{fig:subfig1}
    \end{subfigure}

    \begin{subfigure}[b]{0.99\linewidth}
        \centering
        \includegraphics[width=0.98\linewidth]{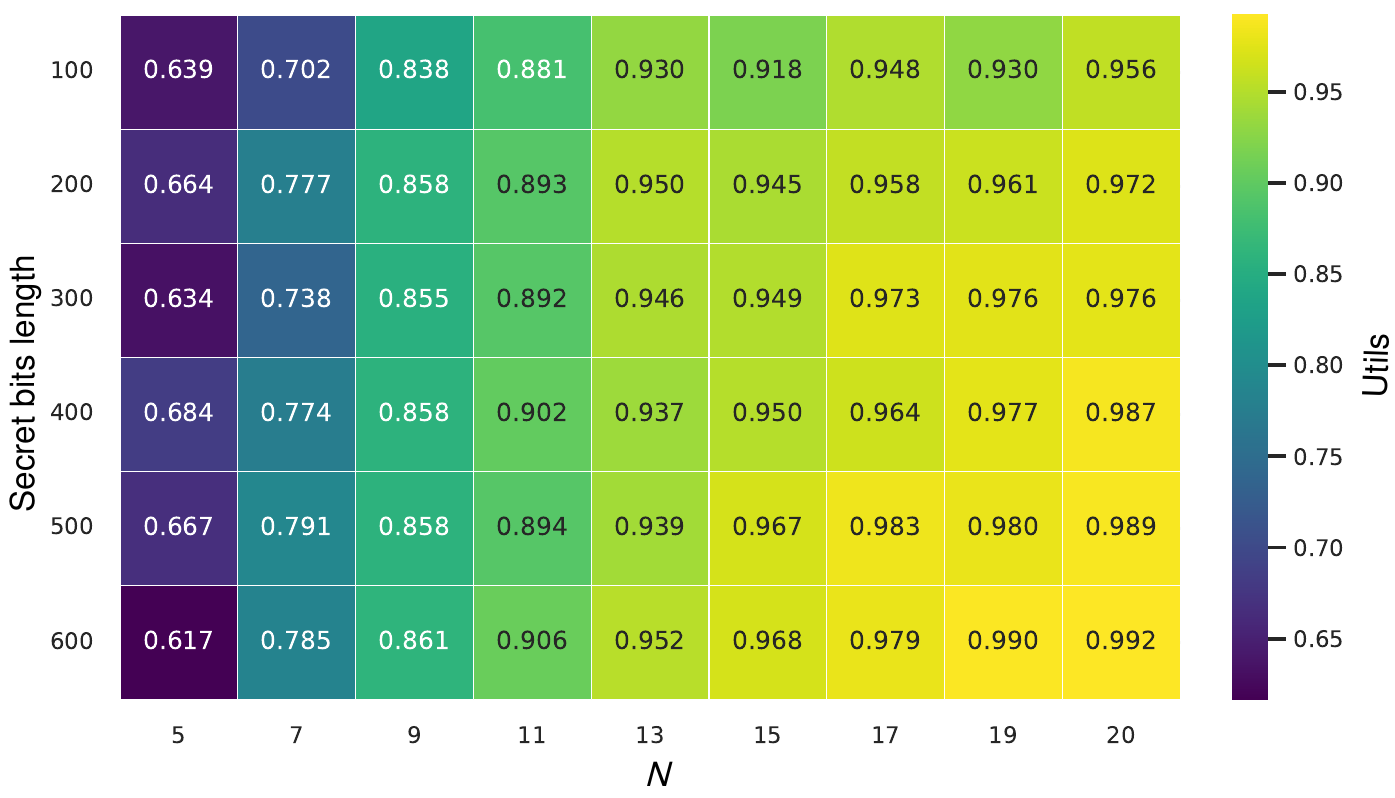}
        \caption{\textit{Mistral3}}
        \label{fig:subfig2}
    \end{subfigure}

    \begin{subfigure}[b]{0.99\linewidth}
        \centering
        \includegraphics[width=0.98\linewidth]{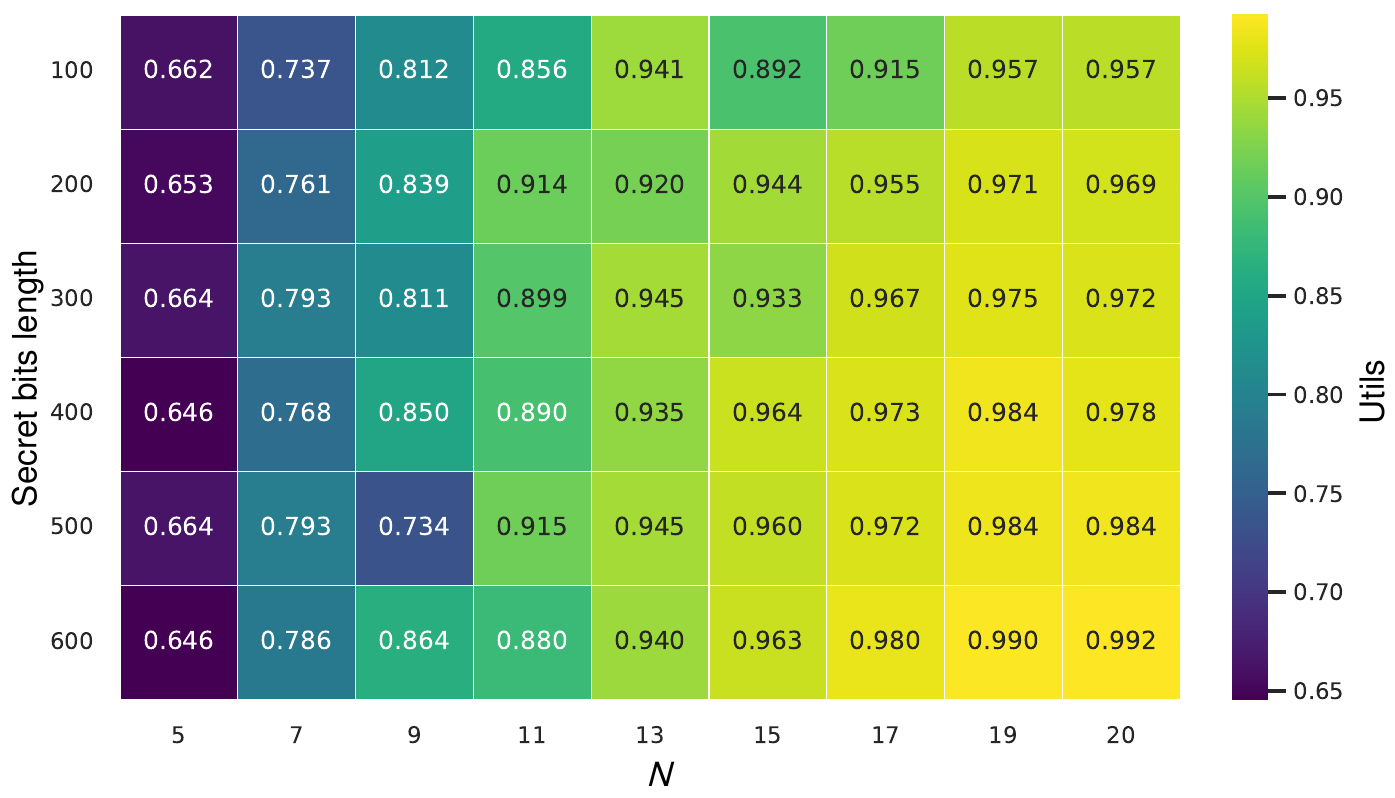}
        \caption{\textit{Qwen2}}
        \label{fig:subfig3}
    \end{subfigure}

    \caption{Information utilization rate of our method for different secret bit lengths and candidate list length ($N$) on three LLMs (\textit{Llama3}, \textit{Mistral3}, and \textit{Qwen2}).}
    \label{fig:heatmap-figs}
\end{figure}

\subsubsection{Time Efficiency}
The time of our approach is predominantly limited by two factors:  the time for model to generate token and the time for sampling. As shown in Table~\ref{tab:main}, even with a sampling scale of around $2^{20}$, our steganography scheme does not incur significant time overhead; its runtime is comparable to random sampling and other baseline methods. With faster sampling techniques~\cite{alisa1,alisa2}, the time required to generate stego tokens could be further reduced. Our scheme is even faster than DISCOP, especially its reordered variant, which must construct a Huffman tree at each time step, which leads to slightly higher runtime despite our optimized Cython implementation.

\subsubsection{Linguistic Quality \& Security }

Similar to other provably secure steganographic constructions, Table~\ref{tab:main} shows that the stegotext generated by our method closely matches randomly sampled text across multiple evaluation metrics, including PPL and diversity measures (Dist-2, Dist-3, and Dist-4). This is consistent with our security analysis, indicating that the stegotext is difficult to distinguish from cover text. As illustrated in Table~\ref{casestudy}, both randomly sampled text and stego text are also perceptually indistinguishable to human readers. Steganalysis results further confirm this: because provably secure schemes make steganographic sampling computationally indistinguishable from random sampling, machine-learning classifiers fail to learn effective features, yielding F1 scores close to 50\%.

\subsubsection{Decoding Correctness}
\label{subsec:correctness}
 In the Proof of Correctness section~\ref{proof_of_correctness}, we explain that when the suffix \( suf \) has a length of \( b \), the probability of a decoding error is at most \(\frac{2^{-b}}{\left(1 + \sqrt{\frac{\lambda}{2^N}}\right)^n}\). Larger \(b\) tightens this bound and can make the decoding error probability negligibly small; in our experiments we choose \( b = 20 \) to reduce overhead while still keeping the concrete upper bound extremely low. In the main experiment, where \( b = 20 \), the upper bound for the error rate is \( 8.695 \times 10^{-7} \). As indicated in Table~\ref{tab:main}, no decoding errors were observed in the main experiments involving 1,000 samples.

\section{Conclusion}
In this paper, we revisit steganography decoding from a list-decoding perspective and show that temporarily maintaining a set of decoding candidates can exploit entropy more fully than prior token-local designs. Building on this insight, we propose a provably secure steganography scheme based on list decoding, which combines list expansion and suffix matching to ensure correct decoding while enabling high-capacity embedding. We provide a security and correctness proof,and derive theoretical lower bound on the embedding capacity.
Extensive experiments on three popular LLMs demonstrate that our method achieves high generation quality, efficiency, and capacity.

\section{Acknowledgement}
The authors gratefully acknowledge Liao, Shao, et al. for sharing their perspective on classification of instantaneous and non-instantaneous steganography in their ongoing work.
\section{Ethics Considerations}

We propose a provably secure steganography scheme based on list decoding that embeds secret messages into seemingly innocuous text. We acknowledge the \textbf{dual-use risk} of this technology: While it protects people's privacy to prevent censorship, it could also be exploited by malicious actors for covert coordination. We treat the secret bits as a random binary bitstream without semantic constraints in expirements so the steganography method is agnostic to content.
Relative to prior provably secure text steganography~\cite{meteor2021,ding2023discop,shimmer,liao2025framework}, our main contribution is an algorithmic design that improves capacity and decoding, rather than the introduction of the underlying ethically sensitive capability of covert communication itself. We therefore do not believe that this work materially exacerbates the ethical risks already associated with this line of research.

Furthermore, we emphasize that our scheme guarantees concealment only at the \textbf{content level}. It does not hide network metadata or traffic patterns, which monitors can still exploit via behavioral analysis. Thus, this method should be viewed not as a standalone panacea for resisting surveillance but as a component of a broader surveillance defense strategy.

\newpage

\newpage

\appendix
\section{Appendix}
\subsection{Alias Method}
\label{alias-method}
Since the number of samples required can be extremely large, the efficiency of standard Inverse Transform Sampling is insufficient. The complexity of traditional Inverse Transform Sampling is $O(|V|)$, where $|V|$ is the size of the model's vocabulary. If $2^k$ samples are required, the complexity of the entire sampling flow becomes $O(2^k|V|)$. In practice, generating 1 million samples could incur a latency of several seconds or even tens of seconds. Therefore, we introduce an efficient sampling method known as the \textbf{Alias Method}\cite{alisa1,alisa2}, proposed by A. J. Walker in 1977. The detailed procedure is given below.

\begin{algorithm}[h]
    \caption{Alias Sampling Method $\mathsf{AliSample}_N(D,\{r\}_{i = 1}^{2N}) \xrightarrow[]{} \{s\}_{i = 1}^N$}
    \label{alias-sampling}
    \renewcommand{\algorithmicrequire}{\textbf{Input:}}
    \renewcommand{\algorithmicensure}{\textbf{Output:}}
    \begin{algorithmic}[1]
    \REQUIRE Distribution $D$, Sequence of random numbers $\{r\}_{i = 1}^{2N}$;
    \ENSURE Sequence of samples $\{s\}_{i = 1}^N$.
        \STATE $S, G \longleftarrow \emptyset, \emptyset$ \COMMENT{Preparation phase: Initialize the list of small probability tokens $S$ and large probability tokens $G$.}
        \FOR{$i \in \{1,2,\cdots,|V|\}$}
            \STATE $q_i \longleftarrow |V|\cdot D(i)$ \COMMENT{Scale probability by a factor of $|V|$}
            \IF{$q_i < 1$}
                \STATE $S \longleftarrow S + \{i\}$ \COMMENT{Construct low-probability set $S$}
            \ELSE
                \STATE $G \longleftarrow G + \{i\}$\COMMENT{Construct high-probability set $G$}
            \ENDIF
        \ENDFOR
        \FOR{$i \in \{1,2,\cdots,|V|\}$}
            \STATE Select and remove an element $j$ from set $S$
            \STATE $A_i, P(A_i) \longleftarrow j, q_j$ \COMMENT{Prioritize low-probability tokens to construct sub-distribution}
            \IF{$G \neq \emptyset$}
                \STATE Select and remove an element $k$ from set $G$
                \STATE $B_i, P(B_i) \longleftarrow k, 1 - q_j$ \COMMENT{Select a high-probability token to fill the remainder}
                \STATE $q_k \longleftarrow q_k - (1 - q_j)$\COMMENT{Update token probability}
                \IF{$q_k < 1$}
                    \STATE $S \longleftarrow S + \{k\}$ \COMMENT{If probability drops below 1, move to low-probability set}
                \ELSE
                    \STATE $G \longleftarrow G + \{k\}$\COMMENT{If probability remains $\ge 1$, keep in high-probability set}
                \ENDIF
            \ENDIF
        \ENDFOR
        \FOR{$i \in \{1,2,\cdots,N\}$}
            \STATE $x \longleftarrow \lfloor |V| \cdot r_{2i -1} \rfloor$ \COMMENT{Randomly select one of the $|V|$ uniform sub-distributions}
            \IF{$r_{2i} < P(A_x)$}
                \STATE $s_i \longleftarrow A_x$ \COMMENT{Sample from the binary sub-distribution; repeat $N$ times}
            \ELSE
                \STATE $s_i \longleftarrow B_x$
            \ENDIF
        \ENDFOR
        \RETURN $\{s_i\}_{i = 1}^N$
    \end{algorithmic}
\end{algorithm}

The algorithm reconstructs the original distribution $D$ in a preparation phase into a uniform combination of sub-distributions, each containing at most two tokens, such that $D(z) = \sum_{i = 1}^{|V|} \frac{1}{|V|} D_i(z)$. To construct such sub-distributions, the probability of each token is first multiplied by $|V|$ as a weight. These weights are then redistributed into multiple normal sub-distributions that sum to 1. Specifically, tokens are divided into two categories: a low-weight list $S$ (weights $<1$) and a high-weight list $G$ (weights $\ge 1$). When constructing a sub-distribution, a token $j$ is removed from $S$. Since $q_j < 1$, a token $k$ is removed from $G$, and its weight $q_k$ is split. Part of it, $1-q_j$, is used to fill the sub-distribution (making the sum 1), while the remainder $q_k - (1-q_j)$ becomes the new weight for token $k$, which is then re-added to either $S$ or $G$. The resulting sub-distribution contains only two tokens, $A_i = j$ and $B_i = k$, with probabilities $P(A_i) = q_j$ and $P(B_i) = 1 - q_j$.

Eventually, the entire distribution $D$ is decomposed into a uniform combination of such sub-distri-\\butions. Once constructed, sampling is straightforward: uniformly select one of the $|V|$ sub-distributions, then output the sample based on a comparison between a random number and $P(A_i)$. The complexity of the preparation phase is $O(|V|)$, while a single sampling operation requires only one comparison, yielding $O(1)$ complexity. When generating $2^k$ samples, the total complexity is $O(|V| + 2^k)$, which is significantly superior to repeated Inverse Transform Sampling.

\subsection{Examples of generated stegotext}

Table~\ref{casestudy} shows several examples of stegotext produced by \textit{llama3} in the main experiment. It is difficult for human to perceptually distinguish between normally generated text produced by random sampling and text generated by provably secure steganographic methods.

\begin{table*}[htbp]
    \centering
    \caption{Examples of stegotexts generated by \textit{Llama3-8B}}
     \resizebox{\textwidth}{!}{
     {
    \begin{tabularx}{\textwidth}{>{\raggedright\arraybackslash}p{3cm}|X}
        \toprule[1.5pt]

        \textbf{Method} & \textbf{Generated Texts} \\
        \midrule
        Prompt & What is Steganography? \\
        \midrule
        Random Sampling &Steganography is a technique of hiding secret information within a non-secret message, image, audio, or other medium. The goal is to conceal the existence of the secret information, making it difficult for unauthorized parties to detect its presence. Imagine sending a letter to a friend with a hidden message inside: you would write the secret message on the back of a seemingly ordinary letter or enclose it within an envelope with a false address...... \\\midrule
        METEOR(R.)~\cite{meteor2021}     &  Steganography is the practice of hiding secret information within a non-secret message, image, audio, or other medium. The goal of steganography is to conceal the existence of the secret information, making it difficult for unauthorized parties to detect its presence. In other words, steganography involves embedding a message, known as the covertext, within a seemingly innocuous medium, such as an image, video, audio file, or text......   \\\midrule
		           METEOR~\cite{meteor2021}    &  Steganography is the practice of hiding secret information within a non-secret message, image, or file in such a way that the very existence of the secret information is not apparent. The goal of steganography is to conceal the fact that a secret message is being transmitted, rather than simply encrypting it. In other words, steganography is a technique used to communicate sensitive information securely by hiding it in plain sight......\\\midrule
		           DISCOP(R.)~\cite{ding2023discop}      &   Steganography is a technique used to hide secret information within a non-secret message, image, audio, or video file. The goal of steganography is to conceal the existence of the secret information, making it difficult for unauthorized parties to detect its presence. In steganography, a cover text is used as a carrier to hide the secret information. The cover text can be a digital image, audio file, video, or even a text file. The secret information, known as the message, is then concealed ...... \\\midrule
             DISCOP~\cite{ding2023discop} & Steganography is a method of hiding secret information within a non-secret message, image, audio, or video file. The idea is to conceal the existence of the hidden message or data, making it difficult for unauthorized parties to detect its presence. In other words, steganography is a technique that embeds a secret message or data within a cover file, such as an image, audio, or video, in a way that the presence of the hidden information is not immediately apparent...\\\midrule
		              SparSamp~\cite{wang2025sparsamp}  &  Steganography is the practice of hiding secret information within a non-secret message, image, audio, or video file in such a way that the very existence of the secret message is not apparent. The goal of steganography is to conceal the presence of the secret information, making it invisible to unauthorized parties. Steganography is often compared to cryptography, which is the practice of encrypting information to make it secure. While cryptography focuses on protecting the confidentiality......\\\midrule
                Shimmer~\cite{shimmer} &Steganography is the practice of hiding secret information within a non-secret message, image, audio file, or other medium in such a way that the very existence of the secret information is not apparent. This is in contrast to cryptography, which encrypts the secret information, making it unreadable without the proper decryption key or access. In steganography, the secret information is embedded in a way that makes it difficult to distinguish from the rest of the message, and the presence of......\\\midrule
                Group~\cite{liao2025framework} & Steganography is the practice of concealing a secret message, image, or audio file within a non-secret message, image, or audio file, called a cover object. The goal of steganography is to hide the existence of the secret message orsignal, so that only the intended recipient can detect and extract it. Steganography is often used to achieve confidentiality and authenticity of the secret information. By embedding the secret message within a seemingly innocuous object......\\\midrule
                Ours & Steganography is the practice of hiding secret information or messages within another, seemingly innocuous, medium or format, such as an image, audio file, or text document. The purpose of steganography is to conceal the existence of the secret message, making it difficult for unauthorized parties to detect or intercept it. Steganography is different from cryptography, which focuses on encrypting and protecting the content of the message. In steganography, the message itself ...... \\

        \bottomrule[1.5pt]
    \end{tabularx}}}
    \label{casestudy}
\end{table*}


\begin{thebibliography}{20}


\ifx \showCODEN    \undefined \def \showCODEN     #1{\unskip}     \fi
\ifx \showDOI      \undefined \def \showDOI       #1{#1}\fi
\ifx \showISBNx    \undefined \def \showISBNx     #1{\unskip}     \fi
\ifx \showISBNxiii \undefined \def \showISBNxiii  #1{\unskip}     \fi
\ifx \showISSN     \undefined \def \showISSN      #1{\unskip}     \fi
\ifx \showLCCN     \undefined \def \showLCCN      #1{\unskip}     \fi
\ifx \shownote     \undefined \def \shownote      #1{#1}          \fi
\ifx \showarticletitle \undefined \def \showarticletitle #1{#1}   \fi
\ifx \showURL      \undefined \def \showURL       {\relax}        \fi
\providecommand\bibfield[2]{#2}
\providecommand\bibinfo[2]{#2}
\providecommand\natexlab[1]{#1}
\providecommand\showeprint[2][]{arXiv:#2}

\bibitem[Bai et~al\mbox{.}(2023)]%
        {qwen}
\bibfield{author}{\bibinfo{person}{Jinze Bai}, \bibinfo{person}{Shuai Bai},
  {and} \bibinfo{person}{Yunfei~Chu et. al}.} \bibinfo{year}{2023}\natexlab{}.
\newblock \showarticletitle{Qwen Technical Report}.
\newblock \bibinfo{journal}{\emph{arXiv preprint arXiv:2309.16609}}
  (\bibinfo{year}{2023}).
\newblock


\bibitem[Bai et~al\mbox{.}(2025)]%
        {shimmer}
\bibfield{author}{\bibinfo{person}{Minhao Bai}, \bibinfo{person}{Kaiyi Pang},
  \bibinfo{person}{Guorui Liao}, \bibinfo{person}{Jinshuai Yang}, {and}
  \bibinfo{person}{Yongfeng Huang}.} \bibinfo{year}{2025}\natexlab{}.
\newblock \showarticletitle{Shimmer: a Provably Secure Steganography Based on
  Entropy Collecting Mechanism}. In \bibinfo{booktitle}{\emph{34th USENIX
  Security Symposium (USENIX Security 25)}}. \bibinfo{pages}{5949--5965}.
\newblock


\bibitem[Devlin et~al\mbox{.}(2018)]%
        {devlin2018bert}
\bibfield{author}{\bibinfo{person}{Jacob Devlin}, \bibinfo{person}{Ming-Wei
  Chang}, \bibinfo{person}{Kenton Lee}, {and} \bibinfo{person}{Kristina
  Toutanova}.} \bibinfo{year}{2018}\natexlab{}.
\newblock \showarticletitle{Bert: Pre-training of deep bidirectional
  transformers for language understanding}.
\newblock \bibinfo{journal}{\emph{arXiv preprint arXiv:1810.04805}}
  (\bibinfo{year}{2018}).
\newblock


\bibitem[Ding et~al\mbox{.}(2023)]%
        {ding2023discop}
\bibfield{author}{\bibinfo{person}{Jinyang Ding}, \bibinfo{person}{Kejiang
  Chen}, \bibinfo{person}{Yaofei Wang}, \bibinfo{person}{Na Zhao},
  \bibinfo{person}{Weiming Zhang}, {and} \bibinfo{person}{Nenghai Yu}.}
  \bibinfo{year}{2023}\natexlab{}.
\newblock \showarticletitle{Discop: Provably Secure Steganography in Practice
  Based on “Distribution Copies”}. In \bibinfo{booktitle}{\emph{2023 IEEE
  Symposium on Security and Privacy (SP)}}. IEEE Computer Society,
  \bibinfo{pages}{2238--2255}.
\newblock


\bibitem[Dubey et~al\mbox{.}(2024)]%
        {llama3}
\bibfield{author}{\bibinfo{person}{Abhimanyu Dubey}, \bibinfo{person}{Abhinav
  Jauhri}, \bibinfo{person}{Abhinav Pandey}, \bibinfo{person}{Abhishek Kadian},
  \bibinfo{person}{Ahmad Al-Dahle}, \bibinfo{person}{Aiesha Letman},
  \bibinfo{person}{Akhil Mathur}, \bibinfo{person}{Alan Schelten},
  \bibinfo{person}{Amy Yang}, \bibinfo{person}{Angela Fan}, {et~al\mbox{.}}}
  \bibinfo{year}{2024}\natexlab{}.
\newblock \showarticletitle{The llama 3 herd of models}.
\newblock \bibinfo{journal}{\emph{arXiv preprint arXiv:2407.21783}}
  (\bibinfo{year}{2024}).
\newblock


\bibitem[Fang et~al\mbox{.}(2017)]%
        {fang2017generating}
\bibfield{author}{\bibinfo{person}{Tina Fang}, \bibinfo{person}{Martin Jaggi},
  {and} \bibinfo{person}{Katerina Argyraki}.} \bibinfo{year}{2017}\natexlab{}.
\newblock \showarticletitle{Generating Steganographic Text with LSTMs}. In
  \bibinfo{booktitle}{\emph{Proceedings of ACL 2017, Student Research
  Workshop}}. \bibinfo{pages}{100--106}.
\newblock


\bibitem[Guruswami and Sudan(1999)]%
        {listdecoding1}
\bibfield{author}{\bibinfo{person}{V. Guruswami} {and} \bibinfo{person}{M.
  Sudan}.} \bibinfo{year}{1999}\natexlab{}.
\newblock \showarticletitle{Improved decoding of Reed-Solomon and
  algebraic-geometry codes}.
\newblock \bibinfo{journal}{\emph{IEEE Transactions on Information Theory}}
  \bibinfo{volume}{45}, \bibinfo{number}{6} (\bibinfo{year}{1999}),
  \bibinfo{pages}{1757--1767}.
\newblock
\urldef\tempurl%
\url{https://doi.org/10.1109/18.782097}
\showDOI{\tempurl}


\bibitem[Hopper(2004)]%
        {hopper2004toward}
\bibfield{author}{\bibinfo{person}{Nicholas~J. Hopper}.}
  \bibinfo{year}{2004}\natexlab{}.
\newblock \emph{\bibinfo{title}{Toward a Theory of Steganography}}.
\newblock Ph.D. thesis. \bibinfo{school}{Carnegie Mellon University},
  \bibinfo{address}{Pittsburgh, PA}.
\newblock


\bibitem[Jiang et~al\mbox{.}(2023)]%
        {mistral}
\bibfield{author}{\bibinfo{person}{Albert~Q Jiang}, \bibinfo{person}{Alexandre
  Sablayrolles}, \bibinfo{person}{Arthur Mensch}, \bibinfo{person}{Chris
  Bamford}, \bibinfo{person}{Devendra~Singh Chaplot}, \bibinfo{person}{Diego
  de~las Casas}, \bibinfo{person}{Florian Bressand}, \bibinfo{person}{Gianna
  Lengyel}, \bibinfo{person}{Guillaume Lample}, \bibinfo{person}{Lucile
  Saulnier}, {et~al\mbox{.}}} \bibinfo{year}{2023}\natexlab{}.
\newblock \showarticletitle{Mistral 7B}.
\newblock \bibinfo{journal}{\emph{arXiv preprint arXiv:2310.06825}}
  (\bibinfo{year}{2023}).
\newblock


\bibitem[Kaptchuk et~al\mbox{.}(2021)]%
        {meteor2021}
\bibfield{author}{\bibinfo{person}{Gabriel Kaptchuk},
  \bibinfo{person}{Tushar~M. Jois}, \bibinfo{person}{Matthew Green}, {and}
  \bibinfo{person}{Aviel~D. Rubin}.} \bibinfo{year}{2021}\natexlab{}.
\newblock \showarticletitle{Meteor: Cryptographically Secure Steganography for
  Realistic Distributions}. In \bibinfo{booktitle}{\emph{Proceedings of the
  2021 ACM SIGSAC Conference on Computer and Communications Security}} (Virtual
  Event, Republic of Korea) \emph{(\bibinfo{series}{CCS '21})}.
  \bibinfo{publisher}{Association for Computing Machinery},
  \bibinfo{address}{New York, NY, USA}, \bibinfo{pages}{1529–1548}.
\newblock
\showISBNx{9781450384544}
\urldef\tempurl%
\url{https://doi.org/10.1145/3460120.3484550}
\showDOI{\tempurl}


\bibitem[Liao et~al\mbox{.}(2025)]%
        {liao2025framework}
\bibfield{author}{\bibinfo{person}{Guorui Liao}, \bibinfo{person}{Jinshuai
  Yang}, \bibinfo{person}{Weizhi Shao}, {and} \bibinfo{person}{Yongfeng
  Huang}.} \bibinfo{year}{2025}\natexlab{}.
\newblock \showarticletitle{A framework for designing provably secure
  steganography}. In \bibinfo{booktitle}{\emph{34th USENIX Security Symposium
  (USENIX Security 25)}}. \bibinfo{pages}{6837--6856}.
\newblock


\bibitem[Ni et~al\mbox{.}(2023)]%
        {wild}
\bibfield{author}{\bibinfo{person}{Jinjie Ni}, \bibinfo{person}{Fuzhao Xue},
  \bibinfo{person}{Yuntian Deng}, \bibinfo{person}{Jason Phang},
  \bibinfo{person}{Kabir Jain}, \bibinfo{person}{Mahir~Hitesh Shah},
  \bibinfo{person}{Zangwei Zheng}, {and} \bibinfo{person}{Yang You}.}
  \bibinfo{year}{2023}\natexlab{}.
\newblock \bibinfo{title}{Instruction in the Wild: A User-based Instruction
  Dataset}.
\newblock
  \bibinfo{howpublished}{\url{https://github.com/XueFuzhao/InstructionWild}}.
\newblock


\bibitem[Sudan(1997)]%
        {listdecoding2}
\bibfield{author}{\bibinfo{person}{Madhu Sudan}.}
  \bibinfo{year}{1997}\natexlab{}.
\newblock \showarticletitle{Decoding of Reed Solomon Codes beyond the
  Error-Correction Bound}.
\newblock \bibinfo{journal}{\emph{J. Complex.}} \bibinfo{volume}{13},
  \bibinfo{number}{1} (\bibinfo{date}{March} \bibinfo{year}{1997}),
  \bibinfo{pages}{180–193}.
\newblock
\showISSN{0885-064X}
\urldef\tempurl%
\url{https://doi.org/10.1006/jcom.1997.0439}
\showDOI{\tempurl}


\bibitem[Walker(1974)]%
        {alisa2}
\bibfield{author}{\bibinfo{person}{Alastair~J Walker}.}
  \bibinfo{year}{1974}\natexlab{}.
\newblock \showarticletitle{New fast method for generating discrete random
  numbers with arbitrary frequency distributions}.
\newblock \bibinfo{journal}{\emph{Electronics Letters}} \bibinfo{volume}{10},
  \bibinfo{number}{8} (\bibinfo{year}{1974}), \bibinfo{pages}{127--128}.
\newblock


\bibitem[Walker(1977)]%
        {alisa1}
\bibfield{author}{\bibinfo{person}{Alastair~J Walker}.}
  \bibinfo{year}{1977}\natexlab{}.
\newblock \showarticletitle{An efficient method for generating discrete random
  variables with general distributions}.
\newblock \bibinfo{journal}{\emph{ACM Transactions on Mathematical Software
  (TOMS)}} \bibinfo{volume}{3}, \bibinfo{number}{3} (\bibinfo{year}{1977}),
  \bibinfo{pages}{253--256}.
\newblock


\bibitem[Wang et~al\mbox{.}(2025)]%
        {wang2025sparsamp}
\bibfield{author}{\bibinfo{person}{Yaofei Wang}, \bibinfo{person}{Gang Pei},
  \bibinfo{person}{Kejiang Chen}, \bibinfo{person}{Jinyang Ding},
  \bibinfo{person}{Chao Pan}, \bibinfo{person}{Weilong Pang},
  \bibinfo{person}{Donghui Hu}, {and} \bibinfo{person}{Weiming Zhang}.}
  \bibinfo{year}{2025}\natexlab{}.
\newblock \showarticletitle{SparSamp: Efficient Provably Secure Steganography
  Based on Sparse Sampling}. In \bibinfo{booktitle}{\emph{34th USENIX Security
  Symposium (USENIX Security 25)}}.
\newblock


\bibitem[Yang et~al\mbox{.}(2020a)]%
        {TS-CSW}
\bibfield{author}{\bibinfo{person}{Zhongliang Yang}, \bibinfo{person}{Yongfeng
  Huang}, {and} \bibinfo{person}{Yujin Zhang}.}
  \bibinfo{year}{2020}\natexlab{a}.
\newblock \showarticletitle{TS-CSW: Text steganalysis and hidden capacity
  estimation based on convolutional sliding windows}.
\newblock \bibinfo{journal}{\emph{Multimedia Tools and Applications}}
  \bibinfo{volume}{79} (\bibinfo{year}{2020}), \bibinfo{pages}{18293--18316}.
\newblock


\bibitem[Yang et~al\mbox{.}(2018)]%
        {yang2018rnn}
\bibfield{author}{\bibinfo{person}{Zhong-Liang Yang},
  \bibinfo{person}{Xiao-Qing Guo}, \bibinfo{person}{Zi-Ming Chen},
  \bibinfo{person}{Yong-Feng Huang}, {and} \bibinfo{person}{Yu-Jin Zhang}.}
  \bibinfo{year}{2018}\natexlab{}.
\newblock \showarticletitle{RNN-stega: Linguistic steganography based on
  recurrent neural networks}.
\newblock \bibinfo{journal}{\emph{IEEE Transactions on Information Forensics
  and Security}} \bibinfo{volume}{14}, \bibinfo{number}{5}
  (\bibinfo{year}{2018}), \bibinfo{pages}{1280--1295}.
\newblock


\bibitem[Yang et~al\mbox{.}(2020b)]%
        {yang2020vae}
\bibfield{author}{\bibinfo{person}{Zhong-Liang Yang}, \bibinfo{person}{Si-Yu
  Zhang}, \bibinfo{person}{Yu-Ting Hu}, \bibinfo{person}{Zhi-Wen Hu}, {and}
  \bibinfo{person}{Yong-Feng Huang}.} \bibinfo{year}{2020}\natexlab{b}.
\newblock \showarticletitle{VAE-Stega: linguistic steganography based on
  variational auto-encoder}.
\newblock \bibinfo{journal}{\emph{IEEE Transactions on Information Forensics
  and Security}}  \bibinfo{volume}{16} (\bibinfo{year}{2020}),
  \bibinfo{pages}{880--895}.
\newblock


\bibitem[Ziegler et~al\mbox{.}(2019)]%
        {ziegler2019neural}
\bibfield{author}{\bibinfo{person}{Zachary Ziegler}, \bibinfo{person}{Yuntian
  Deng}, {and} \bibinfo{person}{Alexander~M Rush}.}
  \bibinfo{year}{2019}\natexlab{}.
\newblock \showarticletitle{Neural Linguistic Steganography}. In
  \bibinfo{booktitle}{\emph{Proceedings of the 2019 Conference on Empirical
  Methods in Natural Language Processing and the 9th International Joint
  Conference on Natural Language Processing (EMNLP-IJCNLP)}}.
  \bibinfo{pages}{1210--1215}.
\newblock


\end{thebibliography}
\end{document}